\algrenewcommand\algorithmicrequire{\textbf{\quad Input:}}
\algrenewcommand\algorithmicensure{\textbf{\quad Output:}}
\theoremstyle{plain}
\newtheorem{theorem}{Theorem}[section]
\newtheorem{lemma}[theorem]{Lemma}
\newtheorem{claim}[theorem]{Claim}
\theoremstyle{definition}
\newcommand{\qedsymb}{\hfill{\rule{2mm}{2mm}}}
\renewenvironment{proof}{\begin{trivlist} \item[\hspace{\labelsep}{\bf \noindent Proof.\/}] }{\qedsymb\end{trivlist}}%
\newenvironment{proofof}[1]{\begin{trivlist} \item[\hspace{\labelsep}{\bf \noindent Proof of #1.\/}] }{\qedsymb\end{trivlist}}%
\newcommand{\bbR}{\mathbb{R}}
\newcommand{\bbN}{\mathbb{N}}
\begin{document}
\begin{titlepage}
\title{Efficient Submodular Function Maximization\\under Linear Packing Constraints}
\author{
    Yossi Azar\thanks{Blavatnik School of Computer Science, Tel-Aviv
    University, Israel. Email: \href{mailto:azar@tau.ac.il}
    {\tt azar@tau.ac.il}. Supported in part by the Israel Science
    Foundation (grant No. 1404/10).}
    \and
    Iftah Gamzu\thanks{Blavatnik School of Computer Science, Tel-Aviv
    University, Israel and Computer Science Division, The Open University, Israel.
    Email: \href{mailto:iftah.gamzu@cs.tau.ac.il}{\tt iftah.gamzu@cs.tau.ac.il}}
}
\date{}
\maketitle

\begin{abstract}
We study the problem of maximizing a monotone submodular set
function subject to linear packing constraints. An instance of
this problem consists of a matrix $A \in [0,1]^{m \times n}$, a
vector $b \in [1,\infty)^m$, and a monotone submodular set
function $f: 2^{[n]} \rightarrow \bbR_+$. The objective is to find
a set $S$ that maximizes $f(S)$ subject to $A x_{S} \leq b$, where
$x_S$ stands for the characteristic vector of the set $S$. A
well-studied special case of this problem is when $f$ is linear.
This special case captures the class of packing integer programs.

Our main contribution is an efficient combinatorial algorithm that
achieves an approximation ratio of $\Omega(1 / m^{1/W})$, where $W
= \min\{b_i / A_{ij} : A_{ij} > 0\}$ is the width of the packing
constraints. This result matches the best known performance
guarantee for the linear case. One immediate corollary of this
result is that the algorithm under consideration achieves constant
factor approximation when the number of constraints is constant or
when the width of the constraints is sufficiently large. This
motivates us to study the large width setting, trying to determine
its exact approximability. We develop an algorithm that has an
approximation ratio of $(1 - \epsilon)(1 - 1/e)$ when $W =
\Omega(\ln m / \epsilon^2)$. This result essentially matches the
theoretical lower bound of $1 - 1/e$. We also study the special
setting in which the matrix $A$ is binary and $k$-column sparse. A
$k$-column sparse matrix has at most $k$ non-zero entries in each
of its column. We design a fast combinatorial algorithm that
achieves an approximation ratio of $\Omega(1 / (Wk^{1/W}))$, that
is, its performance guarantee only depends on the sparsity and
width parameters.
\end{abstract}

\thispagestyle{empty}
\end{titlepage}

\section{Introduction} \label{sec:Intro} Let $f: 2^{[n]}
\rightarrow \bbR$ be a set function, where $[n] = \{1, 2, \ldots,
n\}$. The function $f$ is called \emph{submodular} if and only if
$f(S) + f(T) \geq f(S \cup T) + f(S \cap T)$, for all $S, T
\subseteq [n]$. An alternative definition of submodularity is
through the property of decreasing marginal values. Given a
function $f: 2^{[n]} \rightarrow \bbR$ and a set $S \subseteq
[n]$, the function $f_S$ is defined by $f_S(j) = f(S \cup \{j\}) -
f(S)$. The value $f_S(j)$ is called the incremental marginal value
of element $j$ to the set $S$. The \emph{decreasing marginal
values} property requires that $f_S(j)$ is non-increasing function
of $S$ for every fixed $j$. Formally, it requires that $f_S(j)
\geq f_T(j)$ for all $S \subseteq T$. Since the amount of
information necessary to convey an arbitrary submodular function
may be exponential, we assume a value oracle access to the
function. A \emph{value oracle} for the function $f$ allows us to
query about the value of $f(S)$ for any set $S$. Throughout the
rest of the paper, whenever we refer to a submodular function, we
shall also imply a \emph{normalized} and \emph{monotone} function.
Specifically, we assume that a submodular function $f$ also
satisfies $f(\emptyset) = 0$ and $f(S) \leq f(T)$ whenever $S
\subseteq T$.

In this paper, we focus our attention on the problem (or rather
class of problems) of maximizing a monotone submodular set
function subject to linear packing constraints. Formally, the
input of this problem consists of a matrix $A \in [0,1]^{m \times
n}$, a vector $b \in [1,\infty)^m$, and a monotone submodular set
function $f: 2^{[n]} \rightarrow \bbR_+$. The objective is to find
a set $S$ that maximizes $f(S)$ subject to $A x_{S} \leq b$, where
$x_S$ stands for the characteristic vector of the set $S$. We note
that the domain restrictions on the entries of $A$ and $b$ are
without loss of generality since arbitrary non-negative packing
constraints can be reduced to the above form by first eliminating
any element $j$ for which there is some constraint $i$ such that
$A_{ij} > b_i$, and then scaling the input (see, e.g., the
discussion in~\cite{Srinivasan99}). A well-studied special setting
of our problem is when the objective function $f$ is
\emph{linear}, namely, there is a weight vector $c \in \bbR_+^n$
such that $f(S) = \sum_{j \in S} c_j$. This special setting
captures the class of packing integer programs, which models many
fundamental combinatorial optimization problems, including maximum
independent set, hypergraph matching, and disjoint paths.

\medskip \noindent {\bf Previous work.}
Submodular functions play an instrumental role in computer
science, economics, and operations research as they form a rich
class that is general enough to be valuable for applications, but
still has plenty of structure to allow positive results. These
properties seem to make submodular functions a natural candidate
of choice for objective functions in optimization problems.
Indeed, over the last few years, there has been a surge of
interest in understanding the limits of tractability of
optimization problems in which the classic linear objective
function was replaced by a submodular one.

There has been a long line of research on maximizing monotone
submodular functions subject to matroid and knapsack constraints.
Arguably, the most classic scenario is maximizing a submodular
function subject to a cardinality constraint, that is, $\max
\{f(S) : |S| \leq k\}$. It is known that a simple greedy algorithm
achieves an approximation ratio of $1 - 1 /e$ for this
problem~\cite{NemhauserWF78}. Furthermore, this result is optimal
in two different ways: (i) given only oracle access to $f$, one
cannot attain a better approximation ratio without asking
exponentially many value queries~\cite{NemhauserW78}, and (ii)
even if $f$ has a compact representation, it is still NP-hard to
obtain a better approximation result~\cite{Feige98}. The greedy
approach and its variants has been shown to be useful in
additional constraint
structures~\cite{FisherNW78,KhullerMN99,ChekuriK04,GoundanS07}.
One relevant setting is maximizing a monotone submodular function
under a knapsack constraint~\cite{Wolsey82a}. A knapsack
constraint is essentially a single packing constraint, and may be
viewed as the weighted analog of a cardinality constraint.
Sviridenko~\cite{Sviridenko04} demonstrated that a greedy
algorithm with partial enumeration achieves an approximation
guarantee of $1 - 1/e$ for this problem.

Another approach that has been proven effective in handling
submodular function maximization under different constraint
structures is based on approximately solving a continuous
fractional relaxation of the problem, followed by pipage or
randomized rounding. The pipage rounding technique was originally
developed by Ageev and Sviridenko~\cite{AgeevS04}, and was adapted
to submodular maximization scenarios by Calinescu, Chekuri,
P{\'a}l and Vondr{\'a}k~\cite{CalinescuCPV07}.
Vondr{\'a}k~\cite{Vondrak08} utilized the continuous relaxation
approach to achieve a tight $(1 - 1/e)$-approximation for
maximizing a monotone submodular function subject to a matroid
constraint, and Kulik, Shachnai and Tamir~\cite{KulikST09} used
this approach to attain a $(1 - \epsilon)(1 - 1/e)$-approximation
for maximizing a monotone submodular function under a constant
number of packing constraints. Later on, Chekuri, Vondr{\'a}k and
Zenklusen~\cite{ChekuriVZ10} presented a dependent randomized
rounding scheme that can be utilized to extend those results for
maximizing a monotone submodular function subject to one matroid
and constant number of packing constraints. Recently, Feldman,
Naor and Schwartz~\cite{FeldmanNS11a} presented a new unified
continuous relaxation approach that finds approximate fractional
solutions in both monotone and non-monotone scenarios.

\medskip \noindent {\bf Our contribution.}
Our main result is an efficient multiplicative updates algorithm
for maximizing a monotone submodular function subject to any
number of linear packing constraints. The approximation ratio of
our algorithm matches the best known performance guarantee for the
special case when the objective function $f$ is linear, which is
achieved using the randomized rounding
technique~\cite{RaghavanT87,Raghavan88,Srinivasan99}. More
precisely, let $W = \min\{b_i / A_{ij} : A_{ij} > 0\}$ be the
\emph{width} of the packing constraints, we attain the following
result.

\begin{theorem} \label{th:MainResult1}
There is a deterministic polynomial-time algorithm that attains an
approximation guarantee of $\Omega(1 / m^{1/W})$ for maximizing a
monotone submodular function under linear packing constraints.
\end{theorem}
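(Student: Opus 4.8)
The plan is to design a \emph{multiplicative-weights greedy}. We maintain a candidate set $S$ (initially empty) together with, for each constraint $i$, an exponential weight $w_i = \lambda^{(A x_S)_i / b_i}$, where $\lambda>1$ is a parameter we will eventually set to be of order $m^{1/W}$. Each element $j \notin S$ is assigned the price $p_j(S) = \sum_i w_i A_{ij}/b_i$, and in every round we add to $S$ the \emph{feasible} element (one for which $A x_{S\cup\{j\}} \le b$) maximizing the ratio $f_S(j)/p_j(S)$ of marginal value to price; the process halts when no feasible element remains. Since the weights only increase, feasibility is preserved throughout, $w_i \in [1,\lambda]$, and the potential $\Phi(S)=\sum_i w_i$ always lies in $[m, m\lambda]$. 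To guard against degenerate instances we also return the best singleton, keeping whichever solution has larger $f$-value.

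The analysis rests on three ingredients. First, for the optimal set $O$ and any feasible $S$ we have $\sum_{j\in O} p_j(S) = \sum_i w_i \sum_{j\in O} A_{ij}/b_i \le \sum_i w_i = \Phi(S)$, since $A x_O \le b$; this is a ``dual feasibility'' of the weights. Second, submodularity and monotonicity give $\sum_{j\in O\setminus S} f_S(j) \ge f(O\cup S)-f(S) \ge \opt - f(S)$. Third — and this is where the width enters — since every nonzero $A_{ij}$ satisfies $A_{ij}/b_i \le 1/W$, and $x\mapsto(\lambda^x-1)/x$ is increasing, adding $j_t$ in round $t$ raises the potential by at most $\sum_i w_i(\lambda^{A_{ij_t}/b_i}-1) \le W(\lambda^{1/W}-1)\,p_{j_t}(S_{t-1})$. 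Combining the three via the greedy choice of $j_t$ (averaging over $O\setminus S_{t-1}$) yields a per-round estimate of the form $\Phi_t/\Phi_{t-1} \le 1 + W(\lambda^{1/W}-1)\cdot\frac{f(S_t)-f(S_{t-1})}{\opt-f(S_{t-1})}$, so the potential can grow only in proportion to the progress made toward $\opt$; telescoping then lower-bounds $f(S)$ in terms of how large the potential was driven.

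It remains to argue the potential \emph{is} driven large. Here one uses the stopping condition: when the greedy halts, either $O\subseteq S$ (and we are done), or some element of $O$ is blocked by a constraint $i$ with $(A x_S)_i > b_i - A_{ij} \ge b_i(1-1/W)$, whence $w_i \ge \lambda^{1-1/W}$ and $\Phi \ge \lambda^{1-1/W}+(m-1)$. Feeding this into the telescoped bound and optimizing $\lambda$ already yields $\Omega(1/m^{1/W})$ in the large-width regime; in the small-width regime a single saturated constraint is not enough by itself, and one iterates — peeling off saturated constraints one at a time and re-running the weighted greedy on the residual instance with fresh weights — which effectively runs an induction on $m$ whose per-constraint loss multiplies out to a $\prod_{k\le m}(1-\Theta(1/(Wk))) = \Theta(m^{-1/W})$ factor.

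The main obstacle I expect is exactly this last step. A naive ``run to maximality and charge one saturated constraint'' argument only gives $\Omega\!\left(1/(W(m^{1/W}-1))\right)$, which is weaker than $\Omega(1/m^{1/W})$ near $W\approx\ln m$; closing this gap forces the accounting to track \emph{all} saturated constraints (or to use the constraint-peeling induction above) so as to recover the missing logarithmic factor. The remaining technical points — verifying the sharp convexity estimate $\lambda^x-1 \le W(\lambda^{1/W}-1)x$ on $[0,1/W]$, choosing $\lambda$ so that simultaneously $\lambda^{1/W}-1$ is small and a saturated constraint is ``heavy'', and making the averaging step robust to the fact that not all of $O$ remains feasible as $S$ grows — are routine once the overall scheme is in place.
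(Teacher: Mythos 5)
Your overall framework---exponential weights $\lambda^{(Ax_S)_i/b_i}$, a greedy rule maximizing marginal value per price, and a potential-growth-versus-progress telescoping bound---is exactly the paper's machinery. However, two of your design choices create genuine gaps that the paper's proof is specifically engineered to avoid. First, by restricting the greedy to \emph{feasible} elements and halting when none remain, you break the averaging step: the comparison $f_S(j_t)/p_{j_t}(S)\geq (\opt-f(S))/\Phi(S)$ requires that \emph{every} element of $O\setminus S$ be a candidate at every round, but once some $j\in O\setminus S$ is blocked by a nearly saturated constraint, the greedy never competes with it, and nothing prevents such a blocked element from carrying most of the remaining marginal value of $O$. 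You flag this as ``routine,'' but it is not; the paper resolves it by selecting over \emph{all} remaining elements regardless of feasibility, so that its Claim~\ref{claim:OPTBound} applies to the full set $S^*\setminus S_\ell$, and then repairing the single possible violation at the end by returning $\max\{f(S\setminus\{j\}),f(\{j\})\}\geq f(S)/2$ via submodularity.

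Second, your stopping rule only certifies one nearly saturated constraint, so the final potential is merely $\lambda^{1-1/W}+(m-1)$; as you yourself compute, this loses a factor of $W$ and forces the constraint-peeling induction, which is unproven and problematic for submodular objectives (the marginal values in later phases depend on everything selected earlier, so the ``residual instance'' and the product accounting $\prod_k(1-\Theta(1/(Wk)))$ are not well defined without substantial extra work). The paper's fix is to make the stopping condition itself a potential threshold: the loop runs while $\sum_i b_i w_i\leq\lambda$ with $\lambda=e^Wm$, so upon termination $\ln(\Lambda_t/\Lambda_0)\geq W$ for free. Since the per-round growth factor is $1+eWm^{1/W}\delta_\ell/(f(S^*)-f(S_{\ell-1}))$, the two $W$'s cancel, leaving $1\leq em^{1/W}\ln\bigl(f(S^*)/(f(S^*)-f(S_t))\bigr)$ and hence $f(S_t)\geq f(S^*)/(em^{1/W}+1)$ in one shot, with no induction on $m$. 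Feasibility of the returned set is then a separate, easy lemma: a violated constraint already forces $b_iw_{it}>\lambda$, so infeasibility can only occur at the very last iteration. Without adopting the threshold stopping rule (or supplying a complete peeling argument), your proposal does not establish the claimed $\Omega(1/m^{1/W})$ bound.
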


It is worth noting that our combinatorial algorithm is
deterministic and efficient. Moreover, our technique is different
than the two leading approaches used in the past for submodular
maximization, namely, the greedy approach and the continuous
relaxation approach. Our algorithm is based on a multiplicative
updates method (see,
e.g.,~\cite{PlotkinST95,Young95,GargK07,AzarR06,BriestKV05}). This
method is known to be fruitful for approximately solving problems
that can be cast as linear and integer programs. Nevertheless, the
analysis of these algorithms relies heavily on primal-dual
results, which are not applicable in our submodular setting. We
believe that this new approach may be suitable for other
submodular optimization problems. We also like to remark that a
comparable approximation guarantee may be obtained using the
continuous relaxation approach applied with randomized
rounding~\cite{ChekuriV10}. However, in contrast with that
approach, our algorithm is deterministic, efficient and
combinatorial.

One immediate corollary of Theorem~\ref{th:MainResult1} is that
the algorithm under consideration achieves a constant factor
approximation when the number of constraints is constant or when
the width of the packing constraints is sufficiently large, say $W
= \Omega(\ln m)$. This motivates us to study the large width
setting, trying to determine its exact approximability. The
following theorem summarizes our result in this context.

\begin{theorem} \label{th:MainResult2}
There is a deterministic polynomial-time algorithm that achieves
an approximation guarantee of $(1 - \epsilon)(1 - 1/e)$ for
maximizing a monotone submodular function subject to linear
packing constraints when $W = \Omega(\ln m / \epsilon^2)$, for any
fixed $\epsilon > 0$.
\end{theorem}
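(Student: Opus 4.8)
The plan is to route through a continuous relaxation, solve it to within the optimal $1-1/e$ factor, and then round; the large-width hypothesis is precisely what makes the rounding step essentially lossless, while Sviridenko-style partial enumeration together with the method of conditional expectations is what keeps the whole procedure deterministic.

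First I would work with the multilinear extension $F$ of $f$ and the polytope $P=\{x\in[0,1]^n: Ax\le b\}$, and run the continuous-greedy process over $P$ to produce $x^\star\in P$ with $F(x^\star)\ge(1-1/e)\max_{x\in P}F(x)\ge(1-1/e)\,\opt$, the last step because $x_{S^\star}\in P$ for an optimal set $S^\star$. To keep the algorithm deterministic and in the combinatorial spirit of the paper, I would implement a discretized version of this process over $O(1/\epsilon)$ time steps, each step moving in the direction that greedily maximizes the (deterministically estimated) marginal gain subject to $Ax\le b$, the latter linear program being handled by the multiplicative-weights machinery underlying Theorem~\ref{th:MainResult1}, at the cost of one further $(1-\epsilon)$ factor.

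Next I would scale down: set $y=(1-\epsilon)x^\star$. Submodularity gives $\partial_i\partial_j F\le 0$, so $t\mapsto F(tx^\star)$ is concave and vanishes at $t=0$, whence $F(y)\ge(1-\epsilon)F(x^\star)\ge(1-O(\epsilon))(1-1/e)\,\opt$. At the same time $Ay\le(1-\epsilon)b$, so in every constraint $i$ that $y$ loads the slack is at least $\epsilon b_i\ge\epsilon W\max_j A_{ij}$, and $\epsilon W=\Omega(\ln m/\epsilon)$ by hypothesis. Now round each coordinate independently, placing $j$ in $S$ with probability $y_j$. Then $\mathbb{E}[f(S)]=F(y)$, and for each constraint the load $\sum_{j\in S}A_{ij}$ is a sum of independent variables in $[0,A_{ij}]$ with mean at most $(1-\epsilon)b_i$; rescaling by $W/b_i$ so the summands lie in $[0,1]$ and applying a Chernoff bound shows that constraint $i$ is violated with probability $\exp(-\Omega(\epsilon^2 W))\le m^{-\Omega(1)}$, so all $m$ constraints hold simultaneously with probability at least $1/2$ once $W\ge c\ln m/\epsilon^2$ for a large enough constant $c$.

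The remaining, and hardest, point is to get a \emph{deterministic} $S$ that is feasible and has $f(S)\ge(1-O(\epsilon))(1-1/e)\,\opt$, since $f$ is not linear and neither conditioning on feasibility nor the method of conditional expectations interacts cleanly with $f(S)$ the way it does for a linear objective (the max value $f([n])$ can be arbitrarily larger than $\opt$). I would resolve this by first applying partial enumeration: for each feasible $T$ of size $O(1/\epsilon^2)$ (polynomially many) force $T$ into the solution and run the procedure above on the residual function $f_T$ over the residual polytope, so that for the correct guess $T\subseteq S^\star$ every remaining element contributes at most $\epsilon^{3}\,\opt$ to the relevant partial solutions, i.e.\ $f_T$ is $(\epsilon^{3}\opt)$-Lipschitz. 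Small marginals make $f(S)$ self-bounding, which (i) gives a lower-tail bound $\Pr[f(S)<(1-\epsilon)F(y)]\le\exp(-\Omega(1/\epsilon))$ and, crucially, (ii) supplies a pessimistic estimator for the event $\{f(S)<(1-\epsilon)F(y)\}$. Feeding this estimator, alongside the standard exponential pessimistic estimators for the $m$ constraint-violation events, into the method of conditional expectations lets me round the coordinates of $y$ one at a time so that the combined estimator — which is below $1$ at the initial fractional point — never increases; the integral point it produces is feasible and satisfies $f(S)\ge(1-\epsilon)F(y)$, which combined with the enumeration bookkeeping gives $f(S)\ge(1-O(\epsilon))(1-1/e)\,\opt$. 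Rescaling the $\epsilon$'s by constants yields the stated $(1-\epsilon)(1-1/e)$. I expect the genuine obstacle to be step (ii): turning the self-bounding concentration of $f$ into a pessimistic estimator compatible with the multiplicative constraint estimators, and it is precisely the reduction to the small-marginal regime that makes this possible.
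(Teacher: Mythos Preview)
Your route is entirely different from the paper's, and it carries real gaps that you yourself flag but do not close.

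The paper does \emph{not} touch the multilinear extension, continuous greedy, or any rounding at all. It simply reruns Algorithm~1 (the same multiplicative-updates algorithm that gave Theorem~\ref{th:MainResult1}) with a different update factor, $\lambda=e^{\epsilon W}$, and sharpens the analysis. The key refinements are: (i) use $e^y\le 1+y+y^2$ instead of $a^y\le 1+ay$ when bounding the per-step growth of $\Lambda_\ell=\sum_i b_i w_{i\ell}$, which replaces the lossy $e\,m^{1/W}$ coefficient by roughly $\epsilon W$; (ii) absorb the initial $\Lambda_0=m$ into $e^{\epsilon^2 W}$ using the width hypothesis; and (iii) avoid the factor-$2$ max step by arguing directly about the penultimate set $S_t$, which is always feasible. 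Combining the resulting upper and lower bounds on $\Lambda_t$ with Claim~\ref{claim:OPTBound} and Claim~\ref{claim:SubmodularBound} gives $f(S_t)/f(S^*)\ge 1-e^{-(1-3\epsilon)}\ge(1-4\epsilon)(1-1/e)$. No relaxation, no randomness, no enumeration; the running time is a fixed polynomial independent of $\epsilon$.

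Against this, your proposal has two substantive holes. First, continuous greedy is not deterministic out of the box: estimating $\nabla F(x)$ requires sampling, and ``deterministically estimated'' is doing unexplained work. There are ways around this, but they are nontrivial and you have not supplied one; nor does the multiplicative-weights machinery of Theorem~\ref{th:MainResult1} solve the per-step linear program for you (that theorem is about submodular objectives, and in any case the LP direction-finding is the easy part). Second, and more seriously, your own step~(ii) --- a pessimistic estimator for the submodular lower tail that composes with the exponential constraint estimators --- is exactly where the difficulty lives, and you leave it as an expectation rather than an argument. Self-bounding concentration gives a tail bound, but it does not automatically hand you a polynomially computable pessimistic estimator; manufacturing one that stays below $1$ jointly with $m$ Chernoff-type estimators is the whole problem, and partial enumeration to shrink marginals does not by itself produce it. Until that step is actually carried out, the proposal is a plan for a randomized $(1-\epsilon)(1-1/e)$ algorithm (which the paper already cites as known via \cite{ChekuriV10}), not a proof of a deterministic one. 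Even granting everything, your running time would be $n^{O(1/\epsilon^2)}$ from the enumeration, whereas the paper's is a fixed polynomial.
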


We note that this result almost matches the theoretical lower
bound of $1 - 1/e$, which already holds for maximizing a monotone
submodular function subject to a cardinality
constraint~\cite{NemhauserWF78,Feige98}. Specifically, the large
width setting captures the hard instances of that problem. We
remark that the $(1 - 1/e)$-approximation in the submodular
setting stands in contrast with a $(1 + \epsilon)$-approximation
which can be achieved by randomized rounding when the objective
function is linear and the width is sufficiently large.

We also study the interesting special setting of the problem in
which the constraints matrix is binary, namely, $A \in \{0,1\}^{m
\times n}$ instead of $A \in [0,1]^{m \times n}$. We demonstrate
how to fine-tune our algorithm and its analysis to achieve an
improved approximation guarantee of $\Omega(1 / m^{1/(W+1)})$.
This result is formalized in Theorem~\ref{th:MainResult3}. We like
to emphasize that this result is optimal unless $\mathrm{P} =
\mathrm{ZPP}$. Recently, Bansal et al.~\cite{BansalKNS10}
considered the special case of maximizing a submodular function
under $k$\emph{-column sparse} packing constraints. In this
setting, the constraints matrix has at most $k$ non-zero entries
in each column. They developed an algorithm whose approximation
ratio only depends on the sparsity and width parameters of the
input matrix. Specifically, they presented a $\Omega(1 /
k^{1/W})$-approximation algorithm that employs the continuous
relaxation approach in conjunction with randomized rounding and
alteration. We make a first step towards attaining their
performance guarantee in a deterministic and efficient way. We
present a fast combinatorial algorithm for the binary $k$-column
sparse setting whose approximation ratio only depends on the
sparsity and width parameters of the input matrix. The following
theorem outlines this result.

\begin{theorem} \label{th:MainResult4}
There is a deterministic polynomial-time algorithm that achieves
an approximation guarantee of $\Omega(1 / (W k^{1/W}))$ for
maximizing a monotone submodular function under binary packing
constraints.
\end{theorem}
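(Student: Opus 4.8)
The plan is to adapt the multiplicative-updates algorithm of Theorem~\ref{th:MainResult1} so that the normalization of the exponential weights is driven by the column sparsity $k$ rather than by the total number of constraints $m$. Recall that in a $k$-column sparse binary instance, adding an element $j$ to the current solution touches at most $k$ constraints, and in each such constraint $i$ we have $A_{ij}=1$ while $b_i \ge 1$; hence the width satisfies $W = \min_i b_i$ and each element consumes at most a $1/W$ fraction of any constraint it appears in. The algorithm maintains, for each constraint $i$, a potential of the form $\Phi_i = (1+\delta)^{\lambda \cdot (A x_S)_i}$ for a suitable base $1+\delta$ and scaling $\lambda$, and at each step it selects the element $j$ maximizing the ``bang-per-buck'' ratio $f_S(j) / \sum_{i : A_{ij}=1} \Phi_i$, stopping when adding any further element would violate a constraint or when the weighted cost exceeds a threshold. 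The key point is that because $j$ lies in at most $k$ columns, the increase in $\sum_i \Phi_i$ caused by selecting $j$ is governed by at most $k$ terms, so the ``price'' charged is effectively a sum over $k$ constraints, not $m$.

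First I would fix the parameters: set $\delta$ to a small constant and $\lambda = \Theta(W \ln(1+\delta)/\ln k)$ type quantity so that a constraint that is filled to capacity contributes a potential of roughly $k^{O(1)}$, and initialize each $\Phi_i = 1$ so that $\sum_i \Phi_i = m$ at the start — but crucially, bound the \emph{growth} of the potential rather than its absolute value. Next I would run the selection loop and track two quantities in tandem: (a) a lower bound on $f(S)$ in terms of the accumulated bang-per-buck ratios, using submodularity exactly as in the proof of Theorem~\ref{th:MainResult1} — here one compares the marginal gains of the algorithm's picks against the elements of an optimal solution $O$, using $f(O) \le f(S \cup O) \le f(S) + \sum_{j \in O \setminus S} f_S(j)$ and monotonicity; and (b) an upper bound on the final potential $\sum_i \Phi_i$, showing that the multiplicative increase per step is at most $1 + \delta$ times the cost actually charged, which telescopes. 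The third step is to combine these: if the loop terminates because some constraint is nearly saturated, the potential argument forces the accumulated bang-per-buck sum to be large (proportional to $\frac{1}{W} \cdot \frac{\ln k}{\ln(1+\delta)}$ worth of ``capacity''), and feeding this back into bound (a) yields $f(S) = \Omega(f(O) / (W k^{1/W}))$, since the $k^{1/W}$ factor is exactly $(1+\delta)^{\lambda/W}$ under the above choice of $\lambda$, and the extra $1/W$ factor comes from the coarseness of a single element's consumption relative to the constraint. If instead the loop terminates because every remaining element would overflow some constraint, a direct charging argument (each unpicked optimal element is ``blocked'' by a constraint that the algorithm has filled to within $1/W$ of capacity) gives the same bound.

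The main obstacle is getting the dependence to be $k^{1/W}$ rather than $m^{1/W}$ while the potential is still a sum of $m$ terms: one must argue that although $\sum_i \Phi_i$ starts at $m$, only the $\le k$ constraints hit by each selected element ever contribute to the charged cost, so the relevant ``effective'' potential mass is $O(k)$ per step, and the global factor of $m$ either cancels in the ratio (a)/(b) or contributes only an additive $\ln m$ that is dominated. Making this precise — isolating, for the purpose of the lower bound on $f(S)$, the contribution of the $k$ active constraints and showing the inactive constraints do not dilute the bang-per-buck ratio in a harmful way — is the technical heart of the argument; everything else is a careful but routine re-run of the analysis behind Theorem~\ref{th:MainResult1} with $m$ replaced by $k$ in the appropriate places. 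A secondary subtlety is handling the ``big'' elements whose marginal value alone is a constant fraction of $f(O)$, which as usual are dealt with by a separate partial-enumeration or best-single-element safeguard so that the ratio bounds in (a) remain valid.
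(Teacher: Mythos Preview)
Your proposal has a genuine gap, and it is precisely the one you flag as ``the main obstacle'' without actually resolving. In the analysis of Theorem~\ref{th:MainResult1}, the factor $m^{1/W}$ arises because the potential $\Lambda_0 = \sum_i b_i w_{i0}$ starts at $m$ and must grow to $\lambda$ before the loop halts; the ratio $\lambda / \Lambda_0$ is what drives the bound. If you keep $\Phi_i$ initialized to $1$ so that $\sum_i \Phi_i = m$, then no matter how you exploit column sparsity on the \emph{growth} side, the comparison of the final potential to the initial one still carries a $\ln m$ term. Your claim that this $\ln m$ ``either cancels in the ratio (a)/(b) or contributes only an additive $\ln m$ that is dominated'' is not justified: it would be dominated only when $W = \Omega(\ln m)$, which is exactly the large-width regime already handled by Theorem~\ref{th:MainResult1} and is not the point of Theorem~\ref{th:MainResult4}. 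The observation that each selected element touches at most $k$ constraints bounds the numerator $\sum_i A_{ij} w_i$ of the charged cost, but the crucial inequality linking $\alpha_\ell$ to $\Lambda_{\ell-1}/(f(S^*)-f(S_{\ell-1}))$ (Claim~\ref{claim:OPTBound}) uses the full potential over all $m$ constraints, so sparsity does not tighten it.

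The paper sidesteps this obstacle with a genuinely different algorithm and analysis. Instead of bang-per-buck selection, it considers elements in pure greedy order by marginal value $f_S(j)$ and admits $j$ only if a threshold test $\sum_i A_{ij} w_i < \lambda - 1$ passes; crucially, the weights are initialized to $0$ (via $w_i = \lambda^{(\text{load}_i)/b_i} - 1$), so the initial potential is $0$ rather than $m$, eliminating the problematic term. The analysis is then cardinality-based: one shows that at every prefix $E_t$ of the greedy sequence, $|S^* \cap E_t| \le (1 + 2W\lambda^{1/W})\,|S \cap E_t|$, by bounding $|S_t|$ from below and $|S^*_t \setminus S_t|$ from above in terms of $\sum_i b_i w_{it}$. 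The $k$-sparsity enters cleanly as $\sum_i A_{i\ell} \le k$ when bounding the potential increase from a single admission. A separate structural lemma (Claim~\ref{claim:SparseGreedyBound}) then converts this uniform prefix-cardinality domination into the submodular value bound $f(S) \ge \frac{\alpha}{\alpha+1} f(S^*)$. None of this resembles the telescoping-potential-versus-marginal-gain argument you sketch; in particular, the greedy-ordering lemma is the key new ingredient that your proposal lacks.
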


\medskip \noindent {\bf Other related work.}
The problem of maximizing a non-monotone submodular function
without any structural constraints is known to be both NP-hard and
APX-hard since it generalizes the maximum cut problem. Feige,
Mirrokni and Vondr{\'a}k~\cite{FeigeMV07} developed an algorithm
whose approximation ratio is $0.4$. This result was iteratively
improved by Oveis Gharan and Vondr{\'a}k~\cite{GharanV11}, and
then by Feldman, Naor and Shwartz~\cite{FeldmanNS11b} to a ratio
of $0.42$. Lee, Mirrokni, Nagarajan and Sviridenko~\cite{LeeMNS10}
presented a $(1/4-\epsilon)$-approximation algorithm for
non-monotone submodular maximization subject to a constant number
of packing constraints. This result was iteratively improved by
Chekuri, Vondr{\'a}k and Zenklusen~\cite{ChekuriVZ11}, and then by
Feldman, Naor and Shwartz~\cite{FeldmanNS11a} to a ratio of
$1/e-\epsilon$. Vondr{\'a}k~\cite{Vondrak09}, and very recently,
Dobzinski and Vondr{\'a}k~\cite{DobzinskiV12} developed general
approaches to derive inapproximability results in the value oracle
model.

Unlike submodular function maximization, the problem of minimizing
a submodular function can be performed efficiently, either by the
ellipsoid algorithm~\cite{GrotschelLS81} or through strongly
polynomial-time combinatorial
algorithms~\cite{Schrijver00,IwataFF01,Iwata03,Orlin07,Iwata08,IwataO09}.
Goemans, Harvey, Iwata and Mirrokni~\cite{GoemansHIM09} considered
the problem of explicitly constructing a function that
approximates a monotone submodular function while making a
polynomial number of oracle queries. They showed an essentially
tight $\tilde{O}(n^{1/2})$-approximate solution. Recently, several
submodular analogues of classical combinatorial optimization
problems have been studied~\cite{SvitkinaF08,GoelKTW09,IwataN09}.
These submodular problems are commonly considerably harder to
approximate than their linear counterparts. For example, the
minimum spanning tree problem, which is polynomial-time solvable
with linear cost functions is $\Omega(n)$-hard to approximate with
submodular cost functions~\cite{GoelKTW09}.

\section{Submodular Maximization with Linear Packing Constraints} \label{sec:GeneralPacking}
In this section, we develop a multiplicative updates algorithm for
the problem and analyze its performance. An important input
parameter of our algorithmic template is an update factor. This
parameter plays an essential role in achieving the desired
approximation guarantees in the two settings of interest. We first
consider the general problem, and demonstrate that there is an
update factor for which our algorithm attains an approximation
ratio of $\Omega(1 / m^{1/W})$. In particular, this implies that
the algorithm achieves constant factor approximation for input
instances that have a large width, e.g., instances with $W =
\Omega(\ln m)$. This motivates us to study this large width
setting, trying to determine its exact approximability. We match
(up to a disparity of~$\epsilon$) the theoretical lower bound of
$1 - 1/e$ using a different update factor and a refined analysis.

\subsection{The algorithm} \label{subsec:Algorithm}
The multiplicative updates algorithm, formally described below,
maintains a collection of weights that are updated in a
multiplicative way. Informally, these weights capture the extent
to which each constraint is close to be violated under a given
solution. The algorithm is built around one main loop. In each
iteration of that loop, the algorithm extends the current solution
with a non-selected element that minimizes a normalized sum of the
weights. When the loop terminates, the algorithm returns the
resulting solution in case it is feasible; otherwise, either the
last selected element or the resulting solution without that
element is returned, depending on their value. Recall that $f_S(j)
= f(S \cup \{j\}) - f(S)$ is the incremental marginal value of
element $j$ to the set $S$, and $x_S$ is the characteristic vector
of the set $S$.

\begin{algorithm}
\caption{Multiplicative Updates}\label{cap:MultiplicativeUpdates}%
\begin{algorithmic}[1]
\Require A collection of linear packing constraints defined by $A \in [0,1]^{m \times n}$ and $b \in [1,\infty)^m$ %
\Statex \qquad\; A monotone submodular set function $f: 2^{[n]} \rightarrow \bbR_{+}$ %
\Statex \qquad\; An update factor $\lambda \in \bbR_+$ %
\Ensure A subset of $[n]$ \smallskip %

\State $S \leftarrow \emptyset$ %
\State \textbf{for} $i \leftarrow 1$ to $m$ \textbf{do} $w_i \leftarrow 1 / b_i$ \textbf{end for} \smallskip%

\While{$\sum_{i=1}^m b_i w_i \leq \lambda$ and $S \neq [n]$} \label{alg:StopCond} %
    \State Let $j \in [n] \setminus S$ be the element with minimal $\sum_{i=1}^{m} A_{ij}w_i / f_S(j)$ %
    \State $S \leftarrow S \cup \{j\}$ %
    \State \textbf{for} $i \leftarrow 1$ to $m$ \textbf{do} $w_i \leftarrow w_i \lambda^{A_{ij}/b_i}$ \textbf{end for} \label{alg:WeightUpdate} %
\EndWhile \smallskip %

\State \textbf{if} $Ax_S \leq b$ \textbf{then} return $S$ %
\State \textbf{else if} $f(S \setminus \{j\}) \geq f(\{j\})$ \textbf{then} return $S \setminus \{j\}$ %
\State \textbf{else} return $\{j\}$ \textbf{end if} %
\end{algorithmic}
\end{algorithm}

\subsection{Analysis} \label{subsec:Analysis}
In the remainder of this section, we analyze the performance of
the algorithm. We begin by establishing several lemmas that hold
independently of the value of the update factor. Later on, we
consider specific update factors, and study their effect on the
approximation ratio of the algorithm. For ease of presentation, it
would be convenient to first introduce some notation and
terminology:
\begin{itemize}
\item Let $S^* \subseteq [n]$ be a solution that maximizes the
submodular function subject to the linear packing constraints,
with value of $f(S^*)$.

\item Let $S_t$ be the solution at the end of iteration $t$ of the
algorithm, and note that $S_0 = \emptyset$ indicates the solution
at the beginning of the algorithm. Moreover, let $\gamma(t)$
denote the element selected at iteration $t$ of the algorithm, and
let $\delta_t = f(S_t) - f(S_{t-1})$ be its incremental marginal
value to the solution. Finally, let $w_{it}$ be the value of $w_i$
at the end of iteration $t$ of the algorithm, and remark that
$w_{i0} = 1 / b_i$ is the value of $w_i$ at the beginning of the
algorithm.

\item Let $\Lambda_t = \sum_{i = 1}^m b_i w_{it}$ and $\alpha_t =
\sum_{i=1}^{m} A_{i\gamma(t)}w_{i(t-1)} / \delta_{t}$. Notice that
the algorithm may proceed to iteration $t+1$ only if $\Lambda_t
\leq \lambda$, and that $\Lambda_0 = m$. Also note that $\alpha_t$
is the value which gave rise to the selection of element
$\gamma(t)$ at iteration $t$ of the algorithm.
\end{itemize}

\smallskip \noindent {\bf Correctness.}
We prove that the algorithm outputs a feasible solution. This is
achieved by demonstrating that the returned solution respects the
packing constraints.

\begin{lemma} \label{lemma:Feasibility}
The algorithm outputs a feasible solution.
\end{lemma}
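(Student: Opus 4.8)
The plan is to check feasibility separately for each of the three statements through which the algorithm can return. If it returns $S$ in the first \textbf{if} branch, the explicit test $A x_S \le b$ has just succeeded, so there is nothing to do. If it returns the singleton $\{j\}$, feasibility follows straight from the domain assumptions on the input: every entry of $A$ lies in $[0,1]$ and every entry of $b$ is at least $1$, hence $A_{ij} \le 1 \le b_i$ for every constraint $i$, i.e.\ $A x_{\{j\}} \le b$. (Whenever control reaches either \textbf{else} branch the while loop must have run at least once --- otherwise $S=\emptyset$, which is feasible and returned by the first branch --- so the element $j$ named there is well defined.)

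The one branch needing actual work is the middle one, which returns $S\setminus\{j\}$. Here the key step is to observe that the weights track the load of the constraints exactly: a short induction on $t$, using the multiplicative update on line~\ref{alg:WeightUpdate} and the initialization $w_{i0}=1/b_i$, yields
\[
  w_{it} \;=\; \frac{1}{b_i}\,\lambda^{\,\sum_{s=1}^{t} A_{i\gamma(s)}/b_i} \;=\; \frac{1}{b_i}\,\lambda^{\,(A x_{S_t})_i/b_i},
\]
since $\gamma(1),\dots,\gamma(t)$ are precisely the elements of $S_t$. Multiplying by $b_i$ and summing gives $\Lambda_t=\sum_{i=1}^m \lambda^{\,(A x_{S_t})_i/b_i}$. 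Now let $T$ be the last iteration, so that $j=\gamma(T)$ and $S\setminus\{j\}=S_{T-1}$. Because iteration $T$ was executed, the guard on line~\ref{alg:StopCond} held at its start, i.e.\ $\Lambda_{T-1}\le\lambda$. Every term of $\Lambda_{T-1}=\sum_i \lambda^{\,(A x_{S_{T-1}})_i/b_i}$ is positive, so for each fixed $i$ we get $\lambda^{\,(A x_{S_{T-1}})_i/b_i}\le\Lambda_{T-1}\le\lambda$; since $\lambda>1$, the map $x\mapsto\lambda^{x}$ is strictly increasing, so $(A x_{S_{T-1}})_i\le b_i$. As $i$ was arbitrary, $A x_{S_{T-1}}\le b$, i.e.\ $S\setminus\{j\}$ is feasible. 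This exhausts all cases.

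I do not expect a genuine obstacle; the proof is short. The only points needing care are the index bookkeeping (the set and weights inspected by the guard of iteration $t$ are $S_{t-1}$ and $w_{i(t-1)}$, not $S_t$ and $w_{it}$), the well-definedness of $j$ in the \textbf{else} branches, and the standing assumption $\lambda>1$ --- which holds for the update factors we use and is exactly what lets us invert the exponential in the last step.
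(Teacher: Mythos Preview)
Your proof is correct and follows essentially the same approach as the paper: both rely on the closed-form identity $b_i w_{it} = \lambda^{(Ax_{S_t})_i/b_i}$ together with the loop guard $\Lambda \le \lambda$ to conclude that the penultimate set $S_{T-1}=S\setminus\{j\}$ is feasible. The paper phrases the same computation contrapositively (a violation in constraint $i$ forces $b_i w_{it}>\lambda$, hence $\Lambda_t>\lambda$, so infeasibility can arise only in the final iteration), and it leaves the feasibility of the singleton $\{j\}$ implicit where you spell it out via the domain assumptions $A_{ij}\le 1\le b_i$.
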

\begin{proof}
Let us focus on the solution $S$ when the main loop terminates.
Clearly, if $S$ respects the packing constraints then the returned
solution also respects them. Thus, let us consider the case that
$S$ is infeasible. We next argue that $S$ became infeasible only
at the last iteration of the loop in which element $\ell$ was
selected. Consequently, by inspecting the last two lines of the
algorithm, one can conclude that the returned solution must be
feasible as it is either $S \setminus \{\ell\}$ or $\{\ell$\}.

For the purpose of establishing the previously mentioned argument,
let $\ell$ be the first element that induces a violation in some
constraint. Specifically, suppose $\ell$ induces a violation in
constraint $i$ at iteration $t$. Accordingly, $\sum_{j \in S_t}
A_{ij} > b_i$, and theretofore,
$$
b_i w_{it} = b_i w_{i0} \prod_{j \in S_t} \lambda^{A_{ij} / b_i} =
\lambda^{\sum_{j \in S_t} A_{ij} / b_i}
> \lambda \ ,
$$
where the last equality is due to the fact that $w_{i0} = 1 /
b_i$. This implies that $\Lambda_t > \lambda$, and hence, by
inspecting the main loop stopping condition, we know that the loop
must have terminated immediately after element $\ell$ was
selected.~
\end{proof}

\smallskip \noindent {\bf Approximation.}
We turn to analyze the approximation guarantee of the algorithm.
We begin by establishing a generic algebraic bound applicable for
any monotone submodular function and any arbitrary sequence of
element additions.

\begin{claim} \label{claim:SubmodularBound}
Given a submodular function $f: 2^{[n]} \to \bbR_+$, a set
collection $S_0 \subseteq S_1 \subseteq \cdots \subseteq S_t
\subseteq [n]$, and a set $S^* \subseteq [n]$ satisfying $f(S^*) >
f(S_t)$ then
$$
\sum_{\ell = 1}^{t} \frac{f(S_{\ell}) - f(S_{\ell - 1})}{f(S^*) -
f(S_{\ell - 1})} \leq \ln\left(\frac{f(S^*) - f(S_0)}{f(S^*) -
f(S_t)} \right) \ .
$$
\end{claim}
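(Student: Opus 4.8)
The plan is to bound each summand $\frac{f(S_\ell)-f(S_{\ell-1})}{f(S^*)-f(S_{\ell-1})}$ from above by the corresponding logarithmic increment $\ln\!\left(\frac{f(S^*)-f(S_{\ell-1})}{f(S^*)-f(S_\ell)}\right)$, and then telescope. Concretely, I would set $a_\ell = f(S^*)-f(S_\ell)$, so that $a_0 \ge a_1 \ge \cdots \ge a_t > 0$ (monotonicity of $f$ gives the chain, and $f(S^*) > f(S_t)$ gives positivity of the last term, hence of all of them). The $\ell$-th summand is then $\frac{a_{\ell-1}-a_\ell}{a_{\ell-1}}$, and the claimed bound on the whole sum is $\ln(a_0/a_t) = \sum_{\ell=1}^t \ln(a_{\ell-1}/a_\ell)$. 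So it suffices to prove the termwise inequality
\[
\frac{a_{\ell-1}-a_\ell}{a_{\ell-1}} \;\le\; \ln\!\left(\frac{a_{\ell-1}}{a_\ell}\right) .
\]

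This termwise inequality is just the elementary fact $1 - x \le \ln(1/x) = -\ln x$ for $x > 0$, applied with $x = a_\ell/a_{\ell-1} \in (0,1]$: indeed $\frac{a_{\ell-1}-a_\ell}{a_{\ell-1}} = 1 - \frac{a_\ell}{a_{\ell-1}}$. The inequality $1 - x \le -\ln x$ follows from the standard bound $e^y \ge 1 + y$ applied at $y = -\ln x$ (giving $1/x \ge 1 - \ln x$, i.e. after rearranging $\ln x \le x - 1$), or equivalently from concavity of $\ln$. One has to note the denominators $a_{\ell-1}$ are strictly positive so that division is legitimate — this is exactly where the hypothesis $f(S^*) > f(S_t)$ is used, since it forces $a_t > 0$ and the chain $a_0 \ge \cdots \ge a_t$ then forces every $a_{\ell-1} \ge a_t > 0$.

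Summing the termwise inequality over $\ell = 1, \ldots, t$ collapses the right-hand side by telescoping:
\[
\sum_{\ell=1}^t \frac{f(S_\ell)-f(S_{\ell-1})}{f(S^*)-f(S_{\ell-1})}
\;\le\; \sum_{\ell=1}^t \ln\!\left(\frac{a_{\ell-1}}{a_\ell}\right)
\;=\; \ln\!\left(\frac{a_0}{a_t}\right)
\;=\; \ln\!\left(\frac{f(S^*)-f(S_0)}{f(S^*)-f(S_t)}\right),
\]
which is the claim. There is essentially no obstacle here: submodularity itself is not even needed beyond monotonicity (which guarantees the sequence $a_\ell$ is nonincreasing and nonnegative), and the only analytic input is the one-line inequality $1 - x \le -\ln x$. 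The one place to be careful is to confirm that all denominators are nonzero before writing the fractions, which the hypothesis handles; the rest is the telescoping sum.
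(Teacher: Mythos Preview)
Your proof is correct and is essentially the paper's argument: both establish the termwise bound $\frac{f(S_\ell)-f(S_{\ell-1})}{f(S^*)-f(S_{\ell-1})} \le \ln\frac{f(S^*)-f(S_{\ell-1})}{f(S^*)-f(S_\ell)}$ and then telescope, the only cosmetic difference being that the paper phrases this bound as an integral comparison (replacing the constant integrand $\frac{1}{f(S^*)-f(S_{\ell-1})}$ by $\frac{1}{f(S^*)-x}$ on $[f(S_{\ell-1}),f(S_\ell)]$) while you invoke $1-x\le -\ln x$ directly. One minor slip in your justification: plugging $y=-\ln x$ into $e^y\ge 1+y$ yields $\ln x\ge 1-\tfrac{1}{x}$, not $\ln x\le x-1$; use $y=\ln x$ instead (or, as you also say, concavity of $\ln$).
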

\begin{proof}
One should observe that for any $\ell = 1, \ldots, t$,
$$
\frac{f(S_{\ell}) - f(S_{\ell - 1})}{f(S^*) - f(S_{\ell - 1})} =
\int_{f(S_{\ell - 1})}^{f(S_{\ell})} \frac{1}{f(S^*) - f(S_{\ell -
1})}dx \leq \int_{f(S_{\ell - 1})}^{f(S_{\ell})} \frac{1}{f(S^*) -
x}dx \ ,
$$
where the inequality follows by noticing that the function $1 /
(f(S^*) - x)$ is monotonically increasing for $x \in [0, f(S^*))$.
As a consequence, we obtain that
$$
\sum_{\ell = 1}^{t} \frac{f(S_{\ell}) - f(S_{\ell - 1})}{f(S^*) -
f(S_{\ell - 1})} \leq \sum_{\ell = 1}^{t} \int_{f(S_{\ell -
1})}^{f(S_{\ell})} \frac{1}{f(S^*) - x}dx = \int_{f(S_0)}^{f(S_t)}
\frac{1}{f(S^*) - x}dx =
\ln\left(\frac{f(S^*)-f(S_0)}{f(S^*)-f(S_t)}\right) \ .
$$~
\end{proof}

We continue by bounding the value of the optimal solution using
the main parameters of the algorithm at the end of iteration
$\ell$.

\begin{claim} \label{claim:OPTBound}
$f(S^*) \leq f(S_\ell) + \Lambda_\ell / \alpha_{\ell+1}$ in every
iteration $\ell$.
\end{claim}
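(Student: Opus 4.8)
The plan is to combine three ingredients: the standard subadditivity of marginals for monotone submodular functions, the greedy selection rule that defines $\alpha_{\ell+1}$, and the feasibility of the optimal solution $S^*$. First, using monotonicity and then telescoping $f(S^* \cup S_\ell)$ over the elements of $S^* \setminus S_\ell$ added one at a time to $S_\ell$ (and bounding each marginal from above by its marginal to $S_\ell$, via the decreasing marginal values property), I would write
$$
f(S^*) \;\le\; f(S^* \cup S_\ell) \;\le\; f(S_\ell) + \sum_{j \in S^* \setminus S_\ell} f_{S_\ell}(j) \ .
$$

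Next, recall that at iteration $\ell+1$ the algorithm picks $\gamma(\ell+1)$ so as to minimize $\sum_{i=1}^m A_{ij} w_{i\ell} / f_{S_\ell}(j)$ over $j \in [n] \setminus S_\ell$, and hence $\alpha_{\ell+1}$ equals this minimum value. Therefore $f_{S_\ell}(j) \le \alpha_{\ell+1}^{-1} \sum_{i=1}^m A_{ij} w_{i\ell}$ for every $j \in [n] \setminus S_\ell$. Plugging this into the display above, exchanging the order of summation, and using $A_{ij}, w_{i\ell} \ge 0$ together with the feasibility inequality $\sum_{j \in S^*} A_{ij} \le b_i$, I would obtain
$$
\sum_{j \in S^* \setminus S_\ell} f_{S_\ell}(j) \;\le\; \frac{1}{\alpha_{\ell+1}} \sum_{i=1}^m w_{i\ell} \sum_{j \in S^* \setminus S_\ell} A_{ij} \;\le\; \frac{1}{\alpha_{\ell+1}} \sum_{i=1}^m w_{i\ell}\, b_i \;=\; \frac{\Lambda_\ell}{\alpha_{\ell+1}} \ ,
$$
which, combined with the first display, gives $f(S^*) \le f(S_\ell) + \Lambda_\ell / \alpha_{\ell+1}$.

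I do not expect a genuine obstacle here; the argument is essentially a weighted greedy/LP-style estimate. The only points requiring a little care are: justifying the marginal-sum inequality from the decreasing-marginals formulation rather than the pairwise submodularity inequality; the degenerate case in which some $f_{S_\ell}(j) = 0$, which is harmless since we only ever use the inequality in the form $\alpha_{\ell+1} f_{S_\ell}(j) \le \sum_i A_{ij} w_{i\ell}$ and the right-hand side is non-negative; and noting that the statement is only meaningful when iteration $\ell+1$ actually takes place, so that $\gamma(\ell+1)$ and $\alpha_{\ell+1}$ are defined (when $S^* \subseteq S_\ell$ the claim is immediate from monotonicity, and $\Lambda_\ell / \alpha_{\ell+1} \ge 0$ anyway).
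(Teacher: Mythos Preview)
Your proof is correct and follows essentially the same approach as the paper: both use monotonicity and submodularity to bound $f(S^*)$ by $f(S_\ell) + \sum_{j\in S^*\setminus S_\ell} f_{S_\ell}(j)$, then apply the minimality defining $\alpha_{\ell+1}$ to bound each marginal, exchange the order of summation, and finish with the feasibility of $S^*$. The only difference is cosmetic ordering (the paper first derives the per-element bound from the selection rule and then invokes submodularity), and your added remarks on the degenerate cases are a nice touch that the paper glosses over.
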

\begin{proof}
We know that the element selected at iteration $\ell + 1$
minimizes the term $\sum_{i=1}^{m} A_{ij}w_{i\ell} /
f_{S_\ell}(j)$ with respect to every $j \in [n] \setminus S_\ell$.
This clearly implies that $\alpha_{\ell + 1} \leq \sum_{i=1}^{m}
A_{ij}w_{i\ell} / f_{S_\ell}(j)$ for every $j$ under
consideration. Rearranging the terms in this inequality, we can
bound the marginal value of each element $j \in [n] \setminus
S_\ell$ with respect to $S_\ell$ as
$$
f_{S_\ell}(j) \leq \sum_{i=1}^{m}
\frac{A_{ij}w_{i\ell}}{\alpha_{\ell+1}} \ .
$$
Let $J^* = \{j: j \in S^* \text{ and } j \notin S_\ell \}$ be the
set of elements selected by the optimal solution, but not selected
by the algorithm up to the end of iteration $\ell$. Note that $J^*
\subseteq [n] \setminus S_\ell$, and notice that
$$
f(S^*) \leq f(S^* \cup S_\ell) \leq f(S_\ell) + \sum_{j \in J^*}
f_{S_\ell}(j) \ ,
$$
where the first inequality follows from the monotonicity of $f$,
and the last inequality holds as a result of its submodularity.
Specifically, the latter inequality is obtained using the
decreasing marginal values property. We now focus on bounding the
above right-hand side term. For this purpose, we utilize the bound
derived earlier on the marginal values of the elements in $[n]
\setminus S_\ell$, and attain
$$
\sum_{j \in J^*} f_{S_\ell}(j) \leq \sum_{j \in J^*}
\sum_{i=1}^{m} \frac{A_{ij}w_{i\ell}}{\alpha_{\ell+1}} =
\sum_{i=1}^{m} \frac{w_{i\ell}}{\alpha_{\ell+1}} \sum_{j \in J^*}
A_{ij} \leq \sum_{i=1}^{m} \frac{b_i w_{i\ell}}{\alpha_{\ell+1}} =
\frac{\Lambda_\ell}{\alpha_{\ell+1}} \ ,
$$
where the last inequality follows by recalling that the elements
in $J^*$ are a subset of the elements in the optimal solution, and
thus, constitute a feasible solution respecting all constraints.
As a result, $\sum_{j \in J^*} A_{ij} \leq b_i$.~
\end{proof}

We next demonstrate that the algorithm attains an approximation
guarantee of $\Omega(1 / m^{1/W})$ when the update factor is
$\lambda = e^W m$. Recall that $W = \min\{b_i / A_{ij} : A_{ij} >
0\}$ is the width of the constraints.

\begin{lemma} \label{lemma:Approx1}
The algorithm archives $\Omega(1 / m^{1/W})$-approximation by
using $\lambda = e^W m$.
\end{lemma}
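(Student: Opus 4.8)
\medskip\noindent\textbf{Proof plan.}
The plan is to treat the potential $\Lambda_t=\sum_{i=1}^m b_iw_{it}$ as a progress measure and to show it can grow only like a small power of $f(S^*)/(f(S^*)-f(S_t))$; the loop's stopping condition $\Lambda_T>\lambda$ at the final iteration $T$ then forces $f(S_T)$ to be an $\Omega(1/m^{1/W})$ fraction of $f(S^*)$, and this transfers to the set actually returned.

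First I would bound one multiplicative step. Whenever $A_{i\gamma(t)}>0$ we have $A_{i\gamma(t)}/b_i\in[0,1/W]$, so convexity of $x\mapsto\lambda^x$ on $[0,1/W]$ gives $\lambda^{A_{i\gamma(t)}/b_i}\le 1+W(\lambda^{1/W}-1)A_{i\gamma(t)}/b_i$ (and this is trivial when $A_{i\gamma(t)}=0$). Multiplying by $b_iw_{i(t-1)}$, summing over $i$, and using $\sum_i A_{i\gamma(t)}w_{i(t-1)}=\alpha_t\delta_t$ turns the update in line~\ref{alg:WeightUpdate} into $\Lambda_t\le\Lambda_{t-1}+W(\lambda^{1/W}-1)\alpha_t\delta_t$. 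I would run the rest in the main case $f(S^*)>f(S_T)$, handling $f(S^*)\le f(S_T)$ trivially at the end. In that case $f(S^*)>f(S_{t-1})$ for every $t\le T$, so Claim~\ref{claim:OPTBound} gives $\alpha_t\le\Lambda_{t-1}/(f(S^*)-f(S_{t-1}))$, and with $1+z\le e^z$ this yields $\Lambda_t\le\Lambda_{t-1}\exp\!\big(W(\lambda^{1/W}-1)\delta_t/(f(S^*)-f(S_{t-1}))\big)$. Telescoping from $t=1$ to $T$ (using $\Lambda_0=m$) and bounding $\sum_{t=1}^T\delta_t/(f(S^*)-f(S_{t-1}))=\sum_{t=1}^T(f(S_t)-f(S_{t-1}))/(f(S^*)-f(S_{t-1}))$ by $\ln\!\big(f(S^*)/(f(S^*)-f(S_T))\big)$ via Claim~\ref{claim:SubmodularBound} (with $f(S_0)=0$) gives $\Lambda_T\le m\big(f(S^*)/(f(S^*)-f(S_T))\big)^{W(\lambda^{1/W}-1)}$.

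Next I would bring in the stopping condition. Still in the case $f(S^*)>f(S_T)$: monotonicity rules out $S_T=[n]$, and $S_T$ is either infeasible — whence $\Lambda_T>\lambda$ by the computation in the proof of Lemma~\ref{lemma:Feasibility} — or feasible, in which case the loop stopped because $\Lambda_T>\lambda$; either way $\Lambda_T>\lambda$. Combining this with the bound above and substituting $\lambda=e^Wm$ (so $m/\lambda=e^{-W}$ and $W(\lambda^{1/W}-1)=W(em^{1/W}-1)$), then solving for $f(S_T)$, gives $f(S_T)>f(S^*)\big(1-e^{-1/(em^{1/W}-1)}\big)$. Since $m\ge1$ and $W\ge1$ force $em^{1/W}-1\ge e-1>1$, the exponent lies in $(0,1)$, and the elementary estimate $1-e^{-x}\ge x/2$ on $[0,1]$ gives $f(S_T)\ge f(S^*)/(2em^{1/W})$.

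Finally I would pass to the algorithm's output and absorb the easy case. The output always has value at least $f(S_T)/2$: if $S_T$ is feasible it is returned unchanged, and otherwise Lemma~\ref{lemma:Feasibility} says $\ell=\gamma(T)$ is the last element added and $S_T\setminus\{\ell\}$ is feasible (as is $\{\ell\}$, since $A_{i\ell}\le1\le b_i$), while submodularity gives $f(\{\ell\})\ge f(S_T)-f(S_T\setminus\{\ell\})$, so the larger of $f(S_T\setminus\{\ell\})$ and $f(\{\ell\})$ — which is exactly what the algorithm returns in that case — is at least $(f(S_T\setminus\{\ell\})+f(\{\ell\}))/2\ge f(S_T)/2$. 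When $f(S^*)\le f(S_T)$ this gives output value $\ge f(S^*)/2$ immediately; in the main case it gives output value $\ge f(S^*)/(4em^{1/W})=\Omega(f(S^*)/m^{1/W})$. Either way the approximation ratio is $\Omega(1/m^{1/W})$. The main obstacle is the first step — choosing the convexity linearization so that the factor $W(\lambda^{1/W}-1)$ appears in exactly the right place, and verifying $f(S^*)>f(S_{t-1})$ throughout so that Claims~\ref{claim:OPTBound} and~\ref{claim:SubmodularBound} genuinely apply; everything afterwards is bookkeeping plus the one-line estimate of $1-e^{-x}$.
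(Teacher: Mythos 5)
Your proposal is correct and follows essentially the same route as the paper's proof: the same per-iteration bound $\Lambda_t\le\Lambda_{t-1}+W(\lambda^{1/W}-1)\alpha_t\delta_t$ (the paper uses the marginally weaker $a^y\le 1+ay$, giving coefficient $eWm^{1/W}$ in place of your $W(em^{1/W}-1)$), the same invocation of Claims~\ref{claim:OPTBound} and~\ref{claim:SubmodularBound} to telescope, the same use of the stopping condition $\Lambda_t>\lambda$, and the same factor-of-two loss from the final max-selection. The only cosmetic difference is the last algebraic step, where you use $1-e^{-x}\ge x/2$ on $[0,1]$ while the paper rearranges via $1+y\le e^y$ to get $f(S_t)/f(S^*)\ge 1/(em^{1/W}+1)$; both yield the claimed $\Omega(1/m^{1/W})$ bound.
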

\begin{proof}
Suppose the main loop terminates after $t$ iterations. Notice that
when the loop terminates either $S_t = [n]$ or $\sum_{i=1}^m b_i
w_{it} > e^W m$. In the former case, one can easily infer that the
returned solution is $1/2$-approximation to the optimal solution.
Specifically, if $S_t$ is returned by the algorithm then the
outcome is clearly optimal since $S_t$ consists of all elements,
and if one of $S_t \setminus \{j\}$ or $\{j\}$ is returned then
the value of the solution is a $1/2$-approximation since
$$
\max\big\{f(S_t \setminus \{j\}), f(\{j\})\big\} \geq
\frac{1}{2}\big(f(S_t \setminus \{j\}) + f(\{j\})\big) \geq
\frac{1}{2}f(S_t) \ ,
$$
where the last inequality uses the submodularity of $f$. In fact,
one can easily validate that the above analysis also holds in case
that $f(S_t) \geq f(S^*)$, which can happen since $S_t$ may be
infeasible. Hence, in the remainder of the proof, we shall assume
that $f(S^*) > f(S_t)$ and that the loop terminates with
$\Lambda_t = \sum_{i=1}^m b_i w_{it} > e^W m$.

We concentrate on upper bounding the value of $\Lambda_t$. For
this purpose, we analyze the change in $\sum_{i=1}^m b_i w_i$
along the loop iterations. Observe that for any $\ell = 1, \ldots,
t$,
\begin{eqnarray*}
\Lambda_{\ell} = \sum_{i=1}^m b_i w_{i\ell} & = & \sum_{i=1}^m b_i w_{i(\ell-1)} \cdot \left(e^W m\right)^{A_{i\gamma(\ell)}/b_i}\\
& \leq & \sum_{i=1}^m b_i w_{i(\ell-1)} \cdot \left(1 + \frac{e W m^{1/W} A_{i\gamma(\ell)}}{b_i}\right)\\
& = & \sum_{i=1}^m b_i w_{i(\ell-1)} + e W m^{1/W} \sum_{i=1}^m A_{i\gamma(\ell)} w_{i(\ell-1)}\\
& = & \Lambda_{\ell - 1} + e W m^{1/W} \alpha_\ell \delta_\ell \ .
\end{eqnarray*}
The first inequality follows by plugging $a = e m^{1/W}$ and $y =
W A_{i\gamma(\ell)}/ b_i$ to the inequality $a^y \leq 1 + ay$,
which is known to be valid for any $a \in \bbR_+$ and $y \in
[0,1]$, and the last equality results from the definition of
$\alpha_\ell$. By Claim~\ref{claim:OPTBound}, we know that
$\alpha_\ell \leq \Lambda_{\ell-1} / (f(S^*) - f(S_{\ell-1}))$ in
case $f(S^*) > f(S_{\ell-1})$. The latter condition clearly holds
since $f(S^*) > f(S_t)$ by previous assumption, and $f(S_t) \geq
f(S_{\ell-1})$ for any $\ell$ under consideration. Therefore,
$$
\Lambda_\ell \leq \Lambda_{\ell-1} \cdot \left(1 + \frac{e W
m^{1/W} \delta_\ell}{f(S^*) - f(S_{\ell-1})}\right) \leq
\Lambda_{\ell-1} \cdot \exp\left(\frac{e W m^{1/W}
\delta_\ell}{f(S^*) - f(S_{\ell-1})}\right) \ ,
$$
where the last inequality is due to the fact that $1 + y \leq
e^y$. The resulting recursive definition can be used, in
conjunction with the base case $\Lambda_0 = m$, to upper bound
$\Lambda_t$ by
$$
\Lambda_t \leq \Lambda_0 \cdot \prod_{\ell = 1}^{t} \exp
\left(\frac{e W m^{1/W} \delta_\ell}{f(S^*) -
f(S_{\ell-1})}\right) = m \cdot \exp \left(e W m^{1/W} \sum_{\ell
= 1}^{t}\frac{ f(S_\ell) - f(S_{\ell-1})}{f(S^*) -
f(S_{\ell-1})}\right) \ .
$$
Recall that we assumed that the loop terminated with $\Lambda_t
> e^W m$. This lower bound on $\Lambda_t$ can be utilized, together
with the upper bound on $\Lambda_t$, to yield
$$
1 \leq em^{1/W} \sum_{\ell = 1}^{t}\frac{ f(S_\ell) -
f(S_{\ell-1})}{f(S^*) - f(S_{\ell-1})} \leq em^{1/W}
\ln\left(\frac{f(S^*) - f(S_0)}{f(S^*) - f(S_t)} \right) \ ,
$$
where the last inequality is due to the
Claim~\ref{claim:SubmodularBound}. We note that $f(S_0) = 0$ since
$f$ is normalized and $S_0 = \emptyset$. Subsequently, one can
obtain that $1 - 1 / \exp(1 / em^{1/W}) \leq f(S_t) / f(S^*)$
using simple algebraic manipulations. This can be further
simplified to $1 / (em^{1/W} + 1) \leq f(S_t) / f(S^*)$ by
reutilizing the fact that $1 + y \leq e^y$. Notice that this
proves that the algorithm archives $\Omega(1 /
m^{1/W})$-approximation since the value of the returned solution
is at least $f(S_t) / 2$. This follows from arguments similar to
those presented at the beginning of the proof.~
\end{proof}

We are now ready to complete the proof of the first main result of
the paper. We note that this result matches the best known
approximation guarantee for the case that the objective function
$f$ is linear, achievable using the randomized rounding
technique~\cite{RaghavanT87,Raghavan88,Srinivasan99}.

\begin{proofof}{Theorem~\ref{th:MainResult1}}
By Lemma~\ref{lemma:Feasibility} and Lemma~\ref{lemma:Approx1}, we
know that when the algorithm uses an update factor of $\lambda =
e^W m$, it constructs a feasible solution which approximates the
optimal solution within a factor of $\Omega(1 / m^{1/W})$.~
\end{proofof}

One immediate corollary of this theorem is that the algorithm
under consideration attains a constant approximation guarantee
when the number of constraints is constant or when the width is
sufficiently large, say $W = \Omega(\ln m)$. In particular, one
can reexamine the analysis presented in the proof of
Lemma~\ref{lemma:Approx1}, and deduce that the approximation ratio
of the algorithm approaches $1 / (2e + 2)$ for sufficiently large
$W$'s. A natural followup question is whether one can improve upon
this result. In what follows, we demonstrate that we can beat this
approximation ratio by a careful selection of the update factor.
We present a refined analysis that proves an approximation ratio
of $(1 - \epsilon)(1 - 1/e)$ when $W = \Omega(\ln m /
\epsilon^2)$. In particular, our analysis avoids the two-factor
loss due to the max-selection in the last two lines of the
algorithm.

\begin{lemma} \label{lemma:Approx2}
The algorithm achieves an approximation ratio of $(1 -
4\epsilon)(1 - 1/e)$ by using $\lambda = e^{\epsilon W}$ when $W
\geq \max\{\ln m / \epsilon^2, 1 / \epsilon\}$ for any fixed
$\epsilon > 0$.
\end{lemma}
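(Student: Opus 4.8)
The plan is to mimic the structure of the proof of Lemma~\ref{lemma:Approx1}, but with the update factor $\lambda=e^{\epsilon W}$ and a sharper handling of each inequality so that no constant-factor slack accumulates. As before, if the loop terminates with $S_t=[n]$ or with $f(S_t)\ge f(S^*)$, we are done (in fact we will want to argue here that the returned solution is essentially $f(S_t)$, not $f(S_t)/2$ — see below), so assume $f(S^*)>f(S_t)$ and that the loop stopped because $\Lambda_t=\sum_i b_i w_{it}>\lambda=e^{\epsilon W}$. I would first re-derive the per-iteration recursion for $\Lambda_\ell$. The key change is in the estimate $\lambda^{A_{i\gamma(\ell)}/b_i}=(e^{\epsilon W})^{A_{i\gamma(\ell)}/b_i}$: writing $y=WA_{i\gamma(\ell)}/b_i\in[0,1]$ and $a=e^{\epsilon}$, the bound $a^y\le 1+ay$ gives $\lambda^{A_{i\gamma(\ell)}/b_i}\le 1+e^{\epsilon}\,\epsilon\,WA_{i\gamma(\ell)}/b_i$. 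Following the same chain of equalities as in Lemma~\ref{lemma:Approx1} this yields
\[
\Lambda_\ell \;\le\; \Lambda_{\ell-1}+e^{\epsilon}\epsilon W\,\alpha_\ell\delta_\ell
\;\le\; \Lambda_{\ell-1}\cdot\exp\!\left(\frac{e^{\epsilon}\epsilon W\,\delta_\ell}{f(S^*)-f(S_{\ell-1})}\right),
\]
where the second step uses Claim~\ref{claim:OPTBound} ($\alpha_\ell\le\Lambda_{\ell-1}/(f(S^*)-f(S_{\ell-1}))$, valid since $f(S^*)>f(S_t)\ge f(S_{\ell-1})$) together with $1+y\le e^y$.

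Unrolling from $\Lambda_0=m$ and invoking Claim~\ref{claim:SubmodularBound} gives
\[
e^{\epsilon W}\;<\;\Lambda_t\;\le\; m\cdot\exp\!\left(e^{\epsilon}\epsilon W\ln\frac{f(S^*)}{f(S^*)-f(S_t)}\right),
\]
using $f(S_0)=0$. Taking logarithms and using $W\ge\ln m/\epsilon^2$, i.e. $\ln m\le\epsilon^2 W$, the constraint becomes $\epsilon W-\epsilon^2 W\le e^{\epsilon}\epsilon W\ln\frac{f(S^*)}{f(S^*)-f(S_t)}$, hence $\ln\frac{f(S^*)}{f(S^*)-f(S_t)}\ge (1-\epsilon)e^{-\epsilon}$. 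For $\epsilon\le 1$ one checks $e^{-\epsilon}\ge 1-\epsilon$, so the right side is at least $(1-\epsilon)^2\ge 1-2\epsilon$, and therefore $f(S_t)/f(S^*)\ge 1-e^{-(1-2\epsilon)}\ge (1-1/e)(1-2\epsilon)$ — here I'd use that $1-e^{-(1-\eta)}\ge(1-1/e)(1-\eta)$, which follows from convexity/concavity of $1-e^{-x}$ on $[0,1]$. This already gives a $(1-2\epsilon)(1-1/e)$ bound on $f(S_t)$ itself.

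The part I expect to be the real obstacle is eliminating the factor-$2$ loss from the last two lines of Algorithm~\ref{cap:MultiplicativeUpdates}: we have bounded $f(S_t)$, but when $S_t$ is infeasible the algorithm returns $S_t\setminus\{\gamma(t)\}$ or $\{\gamma(t)\}$, a priori only worth $f(S_t)/2$. To avoid this I would show that at the terminating iteration the marginal $\delta_t=f(S_t)-f(S_{t-1})$ is small relative to $f(S^*)$, so that $f(S_{t-1})$ is already almost as large as $f(S_t)$ and the dropped element costs us only an $\epsilon$-fraction. Concretely, the stopping condition was met at iteration $t$ but \emph{not} at $t-1$, so $\Lambda_{t-1}\le e^{\epsilon W}$; combined with the lower bound $e^{\epsilon W}<\Lambda_t\le\Lambda_{t-1}\exp\!\big(e^{\epsilon}\epsilon W\delta_t/(f(S^*)-f(S_{t-1}))\big)$ and with the condition $W\ge 1/\epsilon$, one can extract that $\delta_t\le O(\epsilon)\,(f(S^*)-f(S_{t-1}))\le O(\epsilon)f(S^*)$, provided one also knows $f(S^*)-f(S_{t-1})$ is not already tiny (if it is, $f(S_{t-1})$ itself is near-optimal and we are done). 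Since $f(S_t\setminus\{\gamma(t)\})=f(S_{t-1})\ge f(S_t)-\delta_t\ge f(S_t)-O(\epsilon)f(S^*)$, the returned solution has value at least $\big((1-2\epsilon)(1-1/e)-O(\epsilon)\big)f(S^*)$, which is at least $(1-4\epsilon)(1-1/e)f(S^*)$ after absorbing constants into the $\epsilon$'s. Putting the two cases together — loop ends with $S_t=[n]$ or $f(S_t)\ge f(S^*)$ (trivial), loop ends feasibly (return $S_t$, use the bound on $f(S_t)$ directly), or loop ends infeasibly (use the small-$\delta_t$ argument) — establishes the claimed $(1-4\epsilon)(1-1/e)$ ratio; feasibility of the output is Lemma~\ref{lemma:Feasibility}.
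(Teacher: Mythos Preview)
Your recursion and the bound on $f(S_t)$ are essentially fine (there is a cosmetic slip: with $a=e^{\epsilon}$ and $y=WA_{i\gamma(\ell)}/b_i$ the inequality $a^{y}\le 1+ay$ gives $1+e^{\epsilon}WA_{i\gamma(\ell)}/b_i$, missing the factor $\epsilon$; the bound you actually want, $e^{x}\le 1+e^{\epsilon}x$ for $x\in[0,\epsilon]$ applied with $x=\epsilon WA_{i\gamma(\ell)}/b_i$, does hold, so the conclusion survives). The genuine gap is the ``$\delta_t$ is small'' step. From $\Lambda_{t-1}\le e^{\epsilon W}$, $\Lambda_t>e^{\epsilon W}$, and $\Lambda_t\le\Lambda_{t-1}\exp\bigl(e^{\epsilon}\epsilon W\,\delta_t/(f(S^*)-f(S_{t-1}))\bigr)$ you can only conclude that the exponent is positive---which is vacuous. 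The recursion controls $\alpha_t\delta_t$ (through $\Lambda_t-\Lambda_{t-1}$), while Claim~\ref{claim:OPTBound} only \emph{upper}-bounds $\alpha_t$; with no lower bound on $\alpha_t$ there is no way to cap $\delta_t$ itself, and in fact nothing in the setup prevents the last marginal from being a constant fraction of $f(S^*)$.

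The paper avoids this altogether by never trying to bound $\delta_t$. Instead it runs the entire potential argument one step earlier, directly on $S_{t-1}$ (in your indexing). The key observation you are missing is that $\Lambda_{t-1}$ is already large: each weight grows by at most $\lambda^{1/W}=e^{\epsilon}$ in a single iteration, so $\Lambda_t\le e^{\epsilon}\Lambda_{t-1}$ and hence $\Lambda_{t-1}\ge e^{\epsilon(W-1)}$. Plugging this lower bound (in place of $e^{\epsilon W}$) into exactly the chain you wrote, now applied up to iteration $t-1$, yields
\[
\ln\frac{f(S^*)}{f(S^*)-f(S_{t-1})}\;\ge\;\frac{\epsilon(W-1)-\epsilon^2 W}{(1+\epsilon)\epsilon W}\;\ge\;1-3\epsilon,
\]
using $W\ge 1/\epsilon$, and therefore $f(S_{t-1})\ge(1-4\epsilon)(1-1/e)f(S^*)$. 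Since $S_{t-1}$ is feasible and is literally one of the two candidates the algorithm compares when $S_t$ is infeasible, the returned value is at least $f(S_{t-1})$ with no factor-two loss. Your proof is salvaged by this reindexing; the attempt to control $\delta_t$ is both unnecessary and, as written, unjustified.
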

\begin{proof}
Suppose the main loop terminates after $t+1$ iterations. Let us
consider the case that it terminates with $\sum_{i=1}^m b_i
w_{i(t+1)} < e^{\epsilon W}$. Note that this implies that $S_{t+1}
= [n]$. One can also argue that $S_{t+1}$ is the returned solution
since it is feasible. The feasibility of $S_{t+1}$ follows from
arguments similar to those presented in the proof of
Lemma~\ref{lemma:Feasibility}. Specifically, one can demonstrate
that if $S_{t+1}$ violates some constraint $i$ then $b_i
w_{i(t+1)} > e^{\epsilon W}$. Obviously, the returned solution is
optimal as $S_{t+1}$ consists of all elements. Hence, in the
remainder of the proof, we shall focus on the case that the loop
terminates with $\sum_{i=1}^m b_i w_{i(t+1)} \geq e^{\epsilon W}$.

We next argue that solution constructed up to and not including
the last iteration, namely $S_t$, achieves the claimed
approximation guarantee. Note that this implies that the returned
solution must also have the desired performance guarantee since
$S_t$ is feasible. The feasibility of $S_t$ also follows from
arguments similar to those exhibited in the proof of
Lemma~\ref{lemma:Feasibility}. Specifically, one can establish
that if $S_{t+1}$ is infeasible then it became infeasible only at
the last iteration of the loop, and thus, $S_t$ is feasible. We
turn to bound the value of $\Lambda_t$. A lower bound can be
easily obtained by noticing that
$$
\Lambda_t e^{\epsilon} = \sum_{i=1}^m b_i w_{it} \cdot
\left(e^{\epsilon W}\right)^{1/W} \geq \sum_{i=1}^m b_i w_{i(t+1)}
\geq e^{\epsilon W} \ ,
$$
and therefore, $\Lambda_t \geq e^{\epsilon (W-1)}$. Similarly to
the proof of Lemma~\ref{lemma:Approx1}, we derive an upper bound
on $\Lambda_t$ by analyzing the change in $\sum_{i=1}^m b_i w_i$
along the loop iterations. Observe that for any $\ell = 1, \ldots,
t$,
\begin{eqnarray*}
\Lambda_{\ell} = \sum_{i=1}^m b_i w_{i\ell} & = & \sum_{i=1}^m b_i w_{i(\ell-1)} \cdot \left(e^{\epsilon W}\right)^{A_{i\gamma(\ell)}/b_i}\\
& \leq & \sum_{i=1}^m b_i w_{i(\ell-1)} \cdot \left(1 + \frac{\epsilon W A_{i\gamma(\ell)}}{b_i} + \left(\frac{\epsilon W A_{i\gamma(\ell)}}{b_i}\right)^2\right)\\
& \leq & \sum_{i=1}^m b_i w_{i(\ell-1)} +  (\epsilon W + \epsilon^2 W) \sum_{i=1}^m A_{i\gamma(\ell)} w_{i(\ell-1)}\\
& = & \Lambda_{\ell - 1} + (\epsilon W + \epsilon^2 W) \alpha_\ell
\delta_\ell \ .
\end{eqnarray*}
The first inequality follows from the fact that $e^y \leq 1 + y +
y^2$ for any $y \in [0,1]$, which can be derived from the
corresponding Taylor expansion. The last inequality is obtained by
using the fact that $W A_{i\gamma(\ell)} / b_i \leq 1$ to reason
that $(\epsilon W A_{i\gamma(\ell)}/ b_i)^2 \leq \epsilon^2 W
A_{i\gamma(\ell)} / b_i$. Finally, the last equality results from
the definition of $\alpha_\ell$. By Claim~\ref{claim:OPTBound}, we
know that $\alpha_\ell \leq \Lambda_{\ell-1} / (f(S^*) -
f(S_{\ell-1}))$ when $f(S^*) > f(S_{\ell-1})$. The latter
condition clearly holds since $f(S^*) \geq f(S_t)$ as $S_t$ is a
feasible solution, and $f(S_t) \geq f(S_{\ell-1})$ for any $\ell$
under consideration. Therefore,
$$
\Lambda_\ell \leq \Lambda_{\ell-1} \cdot \left(1 + \frac{(\epsilon
W + \epsilon^2 W) \delta_\ell}{f(S^*) - f(S_{\ell-1})}\right) \leq
\Lambda_{\ell-1} \cdot \exp\left(\frac{(\epsilon W + \epsilon^2 W)
\delta_\ell}{f(S^*) - f(S_{\ell-1})}\right) \ ,
$$
where the last inequality is due to the fact that $1 + y \leq
e^y$. The resulting recursive definition can be used to upper
bound $\Lambda_t$ by
$$
\Lambda_t \leq \Lambda_0 \cdot \prod_{\ell = 1}^{t} \exp
\left(\frac{(\epsilon W + \epsilon^2 W) \delta_\ell}{f(S^*) -
f(S_{\ell-1})}\right) \leq \exp \left(\epsilon^2 W + (\epsilon W +
\epsilon^2 W) \sum_{\ell = 1}^{t}\frac{ f(S_\ell) -
f(S_{\ell-1})}{f(S^*) - f(S_{\ell-1})}\right) \ ,
$$
where the last inequality holds since $\Lambda_0 = m \leq
\exp(\epsilon^2 W)$ by our assumption regarding the width of the
constraints. Recall that we previously demonstrated that
$\Lambda_t \geq \exp(\epsilon (W-1))$. This lower bound on
$\Lambda_t$ can be utilized, together with the upper bound on
$\Lambda_t$, to yield
$$
\frac{\epsilon (W-1) - \epsilon^2 W}{\epsilon W + \epsilon^2 W}
\leq \sum_{\ell = 1}^{t}\frac{ f(S_\ell) - f(S_{\ell-1})}{f(S^*) -
f(S_{\ell-1})} \leq \ln\left(\frac{f(S^*) - f(S_0)}{f(S^*) -
f(S_t)} \right) \ ,
$$
where the last inequality is due to the
Claim~\ref{claim:SubmodularBound}. Note that $f(S_0) = 0$ as $f$
is normalized and $S_0 = \emptyset$. Also notice that $(\epsilon
(W-1) - \epsilon^2 W)/(\epsilon W + \epsilon^2 W) \geq (1 -
2\epsilon) / (1 + \epsilon) \geq 1 - 3\epsilon$. Subsequently, one
can obtain that $1 - 1 / \exp(1  - 3 \epsilon) \leq f(S_t) /
f(S^*)$ using simple algebraic manipulations. The claimed
approximation ratio follows by noticing that
$$
1 - \frac{e^{3 \epsilon}}{e} \geq 1 - \frac{1 + 3 \epsilon + 9
\epsilon^2}{e} \geq (1 - 4\epsilon)\left(1 - \frac{1}{e}\right) \
,
$$
where the first inequality reuses the fact that $e^y \leq 1 + y +
y^2$ for any $y \in [0,1]$, and both inequalities assume that
$\epsilon \leq 1/4$, which is the interesting range of values for
$\epsilon$.~
\end{proof}

We are now ready to complete the proof of the second principal
result of the paper. We note that this result almost matches the
theoretical lower bound of $1 - 1/e$, which already holds for
maximizing a monotone submodular function subject to a cardinality
constraint~\cite{NemhauserWF78,Feige98}. In particular, our large
width setting captures the hard instances of the latter problem as
this problem can be solved in polynomial-time when $W = O(1 /
\epsilon)$ by enumerating over all sets of size at most $W$.

\begin{proofof}{Theorem~\ref{th:MainResult2}}
Given an instance of the problem in which $W = \Omega(\ln m /
\epsilon^2)$ for any fixed $\epsilon > 0$,
Lemma~\ref{lemma:Feasibility} and Lemma~\ref{lemma:Approx2}
guarantee that employing the algorithm with an update factor of
$\lambda = e^{\epsilon W/4}$ results in a feasible solution that
approximates the optimal solution within a factor of $(1 -
\epsilon)(1 - 1/e)$.~
\end{proofof}

\section{Submodular Maximization with Binary Packing Constraints} \label{subsec:0-1Packing}
In this section, we consider the special setting of monotone
submodular maximization under binary packing constraints, namely,
when $A \in \{0,1\}^{m \times n}$ instead of $A \in [0,1]^{m
\times n}$. Note that we may assume without loss of generality
that $b \in \bbN_+^m$ since each vector entry can be rounded down
to the nearest integer without any consequences whatsoever. This
natural setting has been considered in the past for linear
objective functions. Similarly to the general linear case, the
randomized rounding technique attains the best known approximation
guarantee in this case as well. In particular, it achieves an
approximation ratio of $\Omega(1 / m^{1/(W+1)})$, which is
polynomially better than the general case. This outcome is also
known to be optimal unless $\mathrm{P} =
\mathrm{ZPP}$~\cite{ChekuriK04}. We can demonstrate that our
multiplicative updates approach from
Section~\ref{sec:GeneralPacking} can be utilized to obtain the
above-mentioned improved approximation guarantee for the
underlying setting. This requires a fine-tuning of the algorithm
and its analysis. We defer these technical details to
Appendix~\ref{appsec:0-1}.

We next develop a different multiplicative updates algorithm for
the special setting in which the constraints matrix is $k$-column
sparse. In this case, the number of $1$-value entries in each
column of the input matrix is at most $k$. We prove that our
algorithm achieves an approximation guarantee that does not depend
on the number of rows $m$, but only depends on the sparsity
parameter $k$ and width parameter $W$. More precisely, we
establish that the algorithm attains an approximation ratio of
$\Omega(1 / (Wk^{1/W}))$.

\subsection{The algorithm} \label{subsec:SparseAlgorithm}
The multiplicative updates algorithm, formally described below,
maintains a collection of weights that capture the extent to which
each constraint is close to be violated under a given solution.
The algorithm is built around one main loop. In each iteration of
that loop, the algorithm considers a remaining element whose
marginal contribution to the current solution is maximal, and adds
it to the solution set if its corresponding sum of weights is
sufficiently small. In any case, the element under consideration
is removed from the list of remaining elements. When the loop
terminates, the algorithm returns the resulting solution. Recall
that $f_S(j) = f(S \cup \{j\}) - f(S)$ is the incremental marginal
value of element $j$ to the set $S$

\begin{algorithm}
\caption{Column Sparse Multiplicative Updates}\label{cap:SparseMultiplicativeUpdates}%
\begin{algorithmic}[1]
\Require A collection of linear packing constraints defined by $A \in \{0,1\}^{m \times n}$ and $b \in \bbN_+^m$ %
\Statex \qquad\; A monotone submodular set function $f: 2^{[n]} \rightarrow \bbR_{+}$ %
\Statex \qquad\; An update factor $\lambda \in \bbR_+$ %
\Ensure A subset of $[n]$ \smallskip %

\State $S \leftarrow \emptyset$, $R \leftarrow [n]$ %
\State \textbf{for} $i \leftarrow 1$ to $m$ \textbf{do} $w_i \leftarrow 0$ \textbf{end for} \smallskip%

\While{$R \neq \emptyset$} \label{alg:SparseStopCond} %
    \State Let $j \in R$ be the element with maximal $f_S(j)$ %
    \State \textbf{if} $\sum_{i=1}^m A_{ij} w_i < (\lambda - 1)$ \textbf{then} $S \leftarrow S \cup \{j\}$ %
    \State $R \leftarrow R \setminus \{j\}$ %
    \State \textbf{for} $i \leftarrow 1$ to $m$ \textbf{do} $w_i \leftarrow \lambda^{\sum_{j \in S} A_{ij} / b_i} - 1$ \textbf{end for} \label{alg:SparseWeightUpdate} %
\EndWhile \smallskip %

\State return $S$ %
\end{algorithmic}
\end{algorithm}

\subsection{Analysis} \label{subsec:Analysis}
In what follows, we analyze the performance of the algorithm. We
begin by establishing an algebraic bound applicable for any
monotone submodular function and any solution set of elements,
attained by an algorithm that considers the elements in a greedy
fashion. Note that our algorithm indeed considers the elements in
such fashion. We define the \emph{greedy elements sequence}
$\mathcal{E}(f,S) = \langle e_1, \ldots, e_n \rangle$ of a
submodular function $f$ and a set $S$ as the ordered sequence of
elements considered by a greedy process whose outcome is $S$.
Specifically, the greedy process is initialized with $R_0 = [n]$
and $S_0 = \emptyset$. Then , it runs for $n$ steps, where in each
step $t$, it considers the element $e_t \in R_{t-1}$ that has a
maximum marginal value with respect to the current solution set
$S_{t-1}$, and adds it to the solution set $S_t$ of the next step
if $e_t \in S$. In any case, the element $e_t$ is removed from
$R_{t-1}$ to obtain the set $R_t$ of remaining elements for the
next step. With this definition in mind, let $E_t =
\{e_1,\ldots,e_t\}$ be the set of first $t$ elements in the
sequence under consideration.

\begin{claim} \label{claim:SparseGreedyBound}
Given a submodular function $f: 2^{[n]} \to \bbR_+$, a set $S
\subseteq [n]$, their greedy elements sequence $\mathcal{E}(f,S) =
\langle e_1, \ldots, e_n \rangle$, and another set $S^* \subseteq
[n]$ satisfying $|S \cap E_t| \geq \alpha \cdot |S^* \cap E_t|$
for every $t \in [n]$ and a parameter $\alpha \leq 1$, it holds
that $f(S) \geq (\alpha / (\alpha+1)) \cdot f(S^*)$.
\end{claim}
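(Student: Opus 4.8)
The plan is to follow the greedy process step by step, combining the standard submodular ``marginal'' upper bound on $f(S^{*})$ with the prefix-ratio hypothesis, the latter exploited through a summation-by-parts argument. I adopt the notation of the greedy elements sequence: $S_{t}$, $R_{t}$, $E_{t}$ denote the solution set, the remaining set, and the set of first $t$ considered elements after step $t$, and I write $\delta_{t}=f(S_{t})-f(S_{t-1})$ and $g_{t}=f_{S_{t-1}}(e_{t})$ for the marginal value, with respect to the then-current solution, of the element examined at step $t$. Note that $S_{n}=S$, that every $g_{t}\geq 0$ by monotonicity, and that by construction $\delta_{t}=g_{t}$ if $e_{t}\in S$ and $\delta_{t}=0$ otherwise; hence $f(S)=\sum_{t:\,e_{t}\in S}g_{t}$.

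The first key step is to show that the sequence $g_{1}\geq g_{2}\geq\cdots\geq g_{n}$ is non-increasing. Fix $s<t$: since $e_{t}$ is examined only at step $t$, it still belongs to $R_{s-1}$, so the greedy choice at step $s$ gives $g_{s}=f_{S_{s-1}}(e_{s})\geq f_{S_{s-1}}(e_{t})$; and since $S_{s-1}\subseteq S_{t-1}$, the decreasing marginal values property gives $f_{S_{s-1}}(e_{t})\geq f_{S_{t-1}}(e_{t})=g_{t}$. The second ingredient is the routine submodular estimate $f(S^{*})\leq f(S^{*}\cup S)\leq f(S)+\sum_{j\in S^{*}\setminus S}f_{S}(j)$, obtained from monotonicity and from telescoping the marginals of the elements of $S^{*}\setminus S$ over $S$ together with decreasing marginal values. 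For each $e_{s}\in S^{*}\setminus S$, decreasing marginals again give $f_{S}(e_{s})=f_{S_{n}}(e_{s})\leq f_{S_{s-1}}(e_{s})=g_{s}$, so, enlarging the sum by the nonnegative terms indexed by $S^{*}\cap S$, I obtain $f(S^{*})\leq f(S)+\sum_{s:\,e_{s}\in S^{*}}g_{s}$.

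The final step converts the hypothesis $|S\cap E_{t}|\geq\alpha\,|S^{*}\cap E_{t}|$ into the inequality $\sum_{s:\,e_{s}\in S^{*}}g_{s}\leq\frac{1}{\alpha}\sum_{t:\,e_{t}\in S}g_{t}=\frac{1}{\alpha}f(S)$ (assuming $\alpha>0$; for $\alpha=0$ the claim is trivial by monotonicity). Writing $g_{s}=\sum_{u=s}^{n}(g_{u}-g_{u+1})$ with the convention $g_{n+1}:=0$ and exchanging the order of summation, the two sums become $\sum_{u=1}^{n}(g_{u}-g_{u+1})\,|S^{*}\cap E_{u}|$ and $\sum_{u=1}^{n}(g_{u}-g_{u+1})\,|S\cap E_{u}|$; since each coefficient $g_{u}-g_{u+1}$ is nonnegative (this is precisely where the monotonicity of the $g_{t}$ is used) and $|S\cap E_{u}|\geq\alpha\,|S^{*}\cap E_{u}|$ termwise, the claimed inequality follows. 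Substituting back yields $f(S^{*})\leq f(S)+\frac{1}{\alpha}f(S)=\frac{\alpha+1}{\alpha}f(S)$, that is, $f(S)\geq\frac{\alpha}{\alpha+1}f(S^{*})$.

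I expect the main obstacle to be establishing that the $g_{t}$ form a non-increasing sequence: this is the one place where the detailed bookkeeping of the greedy process (which elements are still available in $R$ at each step) has to be combined with submodularity, and it is exactly the property that makes the summation-by-parts in the last step legitimate. The remaining pieces are a standard greedy/submodular marginal bound and an Abel-summation manipulation.
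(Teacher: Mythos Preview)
Your proof is correct and takes a genuinely different route from the paper. Both arguments begin with the same submodular upper bound $f(S^{*})\leq f(S)+\sum_{j\in S^{*}\setminus S}f_{S}(j)$ and then need to compare the marginals of $S^{*}$-elements to those of $S$-elements along the greedy order. The paper does this by a combinatorial matching: it lists $S=\{a_{1}<\cdots\}$ and $S^{*}=\{b_{1}<\cdots\}$ in the greedy order, matches each $a_{t}$ to a block of $1/\alpha$ consecutive $b$'s, and uses the prefix hypothesis to argue that $a_{t}$ precedes every element of its block, so each matched $b$'s marginal is dominated by $f_{S\cap E_{a_{t}-1}}(a_{t})$. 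This is intuitive but forces the paper to assume $1/\alpha$ integral first and then wave at the fractional case. Your approach replaces the matching by the single structural observation that the sequence $g_{t}=f_{S_{t-1}}(e_{t})$ is non-increasing, and then applies Abel summation so that the prefix hypothesis $|S\cap E_{u}|\geq\alpha\,|S^{*}\cap E_{u}|$ can be used termwise against the nonnegative differences $g_{u}-g_{u+1}$. This is cleaner: it needs no integrality assumption, no separate handling of fractional $1/\alpha$, and it isolates exactly where submodularity plus greedy selection are used (namely, in proving monotonicity of $g_{t}$). The paper's matching buys a slightly more hands-on picture of which $S^{*}$-element is being charged to which $S$-element, but at the cost of the extra case analysis.
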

\begin{proof}
Let us assume without loss of generality that the greedy process
goes over the elements according to the order $1$ to $n$, namely,
$E_1 = \{1\}, E_2 = \{1,2\}$, and so on. We note that this
assumption is valid since one can appropriately rename the
elements. Furthermore, let $S = \{a_1,\ldots,a_{|S|}\}$ and $S^* =
\{b_1,\ldots,b_{|S^*|}\}$ be the respective elements of $S$ and
$S^*$ sorted in an increasing order. Let us suppose that
$1/\alpha$ is integral. We emphasize that this assumption is
merely for simplicity of presentation, as we demonstrate later. We
match between each element of $S$ and $1/\alpha$ distinct elements
from $S^*$. Specifically, each element $a_t$ is matched to the
elements set $S^*_t = \{b_{(t-1)/\alpha+1},\ldots,b_{t/\alpha}\}$.
Notice that every element of $S^*$ is matched to an element of
$S$; else, it must be that $|S^*| > |S|/\alpha$, but this
contradicts the fact that $|S| = |S \cap E_n| \geq \alpha \cdot
|S^* \cap E_n| = \alpha |S^*|$. We next argue that each $a_t \leq
b_{(t-1)/\alpha+1}$. As a result, we attain that each
$$
f_{S \cap E_{a_t-1}}(a_t) \geq f_{S \cap
E_{a_t-1}}(b_{(t-1)/\alpha+1}), \ldots, f_{S \cap
E_{a_t-1}}(b_{t/\alpha}) \ .
$$
The last inequality holds since we known that when the element
$a_t$ was considered by the greedy process, all the elements of
$S^*_t$ were still available, and therefore, their marginal value
with respect to the solution $S \cap E_{a_t-1}$ was no more than
the marginal value of the element $a_t$. Consequently,
\begin{eqnarray*}
f(S^*) \leq f(S) + \sum_{b \in S^* \setminus S} f_S(b) & = & f(S)
+ \sum_{t = 1}^{\lceil \alpha |S^*| \rceil} \sum_{b \in S^*_t} f_S(b)\\
& \leq & f(S) + \frac{1}{\alpha} \sum_{t = 1}^{|S|} f_{S \cap
E_{a_t-1}}(a_t) = \left(1+\frac{1}{\alpha}\right) f(S) \ ,
\end{eqnarray*}
where both inequalities hold by the submodularity of $f$. For the
purpose of establishing the previously mentioned argument, suppose
by way of contradicting that there is some $t$ for which $a_t >
b_{(t-1)/\alpha+1}$. Let us concentrate on the elements set
$E_{(t-1)/\alpha+1}$. Notice that $|S \cap E_{(t-1)/\alpha+1}|
\leq t-1$, whereas $|S^* \cap E_{(t-1)/\alpha+1}| =
(t-1)/\alpha+1$. This implies that $|S \cap E_{(t-1)/\alpha+1}| <
\alpha \cdot |S^* \cap E_{(t-1)/\alpha+1}|$, a contradiction. We
conclude by noting that our assumption that $1/\alpha$ is integral
can be easily neglected. Specifically, one need to modify that
proof in such a way that a fractional part of an element from
$S^*$ may be matched to an element form $S$. Then, notice that at
most two fractional parts of an element of $T$ are matched to
elements of $S$, and those elements must appear before the element
of $S^*$ in the greedy elements sequence.~
\end{proof}

We now turn to establish our main result for the special setting
of maximizing a monotone submodular function under $k$-column
sparse packing constraints.

\begin{proofof}{Theorem~\ref{th:MainResult4}}
We first claim that the algorithm outputs a feasible solution,
namely, a solution that respects the packing constraints. Suppose
by way of contradiction that $\ell$ is the first element that is
added to $S$ and induces a violation in some constraint $i$ at
iteration $t$ of the main loop. Note that $A_{i \ell} = 1$. Let
$S_t$ be the solution at the end of iteration $t$, and notice that
$\sum_{j \in S_t} A_{i j} = b_i + 1$ since all the entries of $A$
are binary. This implies that $w_i = \lambda - 1$ at the beginning
of the iteration in which $\ell$ was considered, and thus,
$\sum_{i=1}^m A_{i \ell} w_i \geq \lambda - 1$. Inspecting the
selection rule, one can infer that $\ell$ could not have been
selected.

We next demonstrate that the algorithm attains an approximation
guarantee of $\Omega(1 / (Wk^{1/W}))$ when the update factor is
$\lambda = k + 1$. Recall that $W$ is the width of the
constraints, which is equal to $\min\{b_i\}$ in our case.
Similarly to before, we denote by $S^* \subseteq [n]$ a solution
that maximizes the submodular function subject to the linear
packing constraints. Let $\langle e_1, \ldots, e_n \rangle$ be the
ordered sequence of elements considered by our algorithm, and note
that it is essentially the greedy elements sequence
$\mathcal{E}(f,S)$. Moreover, let $E_t = \{e_1,\ldots,e_t\}$ be
the set of first $t$ elements in that sequence, $S^*_t = S^* \cap
E_t$ be the elements of $E_t$ in the optimal solution, $S_t = S
\cap E_t$ be the elements of $E_t$ in our algorithm's solution,
and $w_{it} = \lambda^{\sum_{j \in S_t} A_{ij} / b_i} - 1$ be the
value of $w_i$ at the end of iteration $t$ of the algorithm. We
prove the two following claims:

\begin{claim}
For every $t \in \{0, \ldots, n\}$,
$$
|S_t| \geq \frac{\sum_{i=1}^{m} b_i w_{it}}{W \lambda^{1/W} (k +
\lambda - 1)} \ .
$$
\end{claim}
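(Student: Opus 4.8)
The plan is to prove the claim by induction on $t$. The base case $t=0$ is immediate: $S_0=\emptyset$ gives $|S_0|=0$, while $w_{i0}=\lambda^{0}-1=0$ makes the right-hand side vanish as well. For the inductive step I would split according to whether the element $e_t$ examined in iteration $t$ is actually inserted into $S$. If $e_t\notin S$, then $S_t=S_{t-1}$, so $w_{it}=w_{i(t-1)}$ for every $i$ and $|S_t|=|S_{t-1}|$, and the inequality carries over verbatim from the inductive hypothesis. Hence the whole argument reduces to the case $e_t\in S$, where $|S_t|=|S_{t-1}|+1$ and it suffices to show that the total weight increase $\sum_{i=1}^m b_i\big(w_{it}-w_{i(t-1)}\big)$ is at most $W\lambda^{1/W}(k+\lambda-1)$; dividing this bound by $W\lambda^{1/W}(k+\lambda-1)$ and adding the inductive hypothesis then yields exactly $|S_t|\ge \sum_{i} b_i w_{it}/\big(W\lambda^{1/W}(k+\lambda-1)\big)$.

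To bound the weight increase I would first observe that only rows $i$ with $A_{ie_t}=1$ change, and there are at most $k$ of them by the column-sparsity assumption. For such a row, writing $w_{it}+1=\lambda^{\sum_{j\in S_{t-1}}A_{ij}/b_i+1/b_i}=\lambda^{1/b_i}\big(w_{i(t-1)}+1\big)$ gives the clean identity $b_i\big(w_{it}-w_{i(t-1)}\big)=b_i\big(\lambda^{1/b_i}-1\big)\big(w_{i(t-1)}+1\big)$. The next ingredient is the elementary inequality $a^{y}\le 1+(a-1)y$, valid for $a\ge 1$ and $y\in[0,1]$ by convexity of $y\mapsto a^{y}$; applying it with $a=\lambda^{1/W}\ge 1$ and $y=W/b_i\le 1$ (using $b_i\ge W$) yields $\lambda^{1/b_i}-1\le(\lambda^{1/W}-1)\,W/b_i$, and therefore $b_i\big(\lambda^{1/b_i}-1\big)\le W(\lambda^{1/W}-1)\le W\lambda^{1/W}$ uniformly in $i$.

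Finally I would invoke the selection rule. Since $e_t$ was added in iteration $t$, the test $\sum_{i=1}^m A_{ie_t}w_{i(t-1)}<\lambda-1$ passed, so $\sum_{i:\,A_{ie_t}=1}w_{i(t-1)}<\lambda-1$; combining this with the bound of $k$ on the number of affected rows gives $\sum_{i:\,A_{ie_t}=1}\big(w_{i(t-1)}+1\big)\le k+\lambda-1$. Putting the pieces together,
$$\sum_{i=1}^m b_i\big(w_{it}-w_{i(t-1)}\big)=\sum_{i:\,A_{ie_t}=1} b_i\big(\lambda^{1/b_i}-1\big)\big(w_{i(t-1)}+1\big)\le W\lambda^{1/W}\!\!\sum_{i:\,A_{ie_t}=1}\!\!\big(w_{i(t-1)}+1\big)\le W\lambda^{1/W}(k+\lambda-1),$$
which is exactly what is needed to close the induction. (Note that the particular value $\lambda=k+1$ is never used here; the statement holds for any $\lambda\ge 1$.)

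The only genuine obstacle is getting the per-row estimate $b_i\big(\lambda^{1/b_i}-1\big)\le W\lambda^{1/W}$ right — that is, recognizing that $x\mapsto x\big(\lambda^{1/x}-1\big)$ is decreasing, so the worst case is $b_i=W$. The convexity inequality $a^{y}\le 1+(a-1)y$ packages this cleanly, simultaneously handling the exponent rescaling $\lambda^{1/b_i}=(\lambda^{1/W})^{W/b_i}$. Everything else is bookkeeping: the induction is the standard potential-function argument for multiplicative-weights algorithms, with $\sum_i b_i w_i$ as the potential and the selection rule supplying the per-step control.
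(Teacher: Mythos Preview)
Your proof is correct and follows essentially the same approach as the paper: induction on $t$, with the inductive step reducing to bounding the one-step potential increase $\sum_i b_i(w_{it}-w_{i(t-1)})$ by $W\lambda^{1/W}(k+\lambda-1)$ via a power inequality on $\lambda^{1/b_i}$, the selection rule $\sum_i A_{ie_t}w_{i(t-1)}<\lambda-1$, and $k$-column sparsity. The only cosmetic difference is that you use the convexity bound $a^y\le 1+(a-1)y$ and then relax, whereas the paper uses $a^y-1\le ay$ directly; both yield the same per-row estimate $b_i(\lambda^{1/b_i}-1)\le W\lambda^{1/W}$.
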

\begin{proof}
We prove this claim by induction on $t$. The induction base is
when the algorithm begins, i.e., when $t = 0$. It is easy to see
that both sides of the above expression are zero in this case. In
particular, notice that all the weights are initialized to $0$.
Observe that in order to establish the induction step, it is
sufficient to demonstrate that when an element $\ell$ is selected
at iteration $t+1$ then $1 \geq \sum_{i=1}^{m} b_i \cdot
(w_{i(t+1)} - w_{it}) / (W \lambda^{1/W} (k + \lambda - 1))$. For
this purpose, notice that
$$
w_{i(t+1)} - w_{it} = \lambda^{\sum_{j \in S_t} A_{ij} / b_i}
\cdot \left(\lambda^{\left(\sum_{j \in S_{t+1}} A_{ij} - \sum_{j
\in S_t} A_{ij}\right) / b_i} - 1 \right) \leq \lambda^{\sum_{j
\in S_t} A_{ij} / b_i} \cdot \frac{W \lambda^{1/W}
A_{i\ell}}{b_{i}} \ ,
$$
where the inequality follows by plugging $a = \lambda^{1/W}$ and
$y = W / b_i \cdot (\sum_{j \in S_{t+1}} A_{ij} - \sum_{j \in S_t}
A_{ij}) = W A_{i\ell} / b_i$ to the inequality $a^y -1 \leq ay$,
which is known to be valid for any $a \in \bbR_+$ and $y \in
[0,1]$. As a consequence, we attain that
\begin{eqnarray*}
\sum_{i=1}^{m} b_i \cdot (w_{i(t+1)} - w_{it}) & \leq & W \lambda^{1/W} \sum_{i=1}^{m} A_{i \ell} \cdot  \lambda^{\sum_{j \in S_t} A_{ij} /b_i} \\
& = & W \lambda^{1/W} \sum_{i=1}^{m} A_{i \ell} \cdot \left( (\lambda^{\sum_{j \in S_t} A_{ij} / b_i} - 1) + 1 \right)\\
& = & W \lambda^{1/W} \left(\sum_{i=1}^{m} A_{i \ell} w_{it} + \sum_{i=1}^{m} A_{i \ell}\right) \\
& < & W \lambda^{1/W} \left((\lambda - 1) + k\right) \ ,
\end{eqnarray*}
where the last inequality holds since we know that (1) element
$\ell$ is selected at iteration $t+1$, and thus, $\sum_{i=1}^{m}
A_{i \ell} w_{it} < \lambda - 1$, and (2) the packing constraints
are $k$-column sparse, namely, the number of $1$-value entries in
each column is at most $k$, and hence, $\sum_{i=1}^{m} A_{i \ell}
\leq k$.
\end{proof}

\begin{claim}
For every $t \in \{0, \ldots, n\}$,
$$
|S^*_t| \leq |S_t| + \frac{\sum_{i=1}^{m} b_i w_{it}}{\lambda - 1}
\ .
$$
\end{claim}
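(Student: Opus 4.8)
The plan is to prove this claim by induction on $t$, mirroring the structure of the previous claim. The base case $t = 0$ is immediate: $S^*_0 = S_0 = \emptyset$ and all weights $w_{i0} = 0$, so both sides vanish. For the induction step, I would consider iteration $t+1$ and split into two cases according to whether the element $e_{t+1}$ considered at that iteration is added to $S$ or rejected. If $e_{t+1}$ is rejected, then $S_{t+1} = S_t$ and the weights do not change, so $w_{i(t+1)} = w_{it}$ for all $i$; the only thing to worry about is the left-hand side, which grows by at most one when $e_{t+1} \in S^*$. Hence I must show that in the rejection case, the slack term $\sum_i b_i w_{it} / (\lambda - 1)$ already absorbs this unit increase — that is, when a $S^*$-element gets rejected, the weights are large enough to pay for it.

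The key observation for the rejection case is the selection rule: $e_{t+1}$ is rejected precisely when $\sum_{i=1}^m A_{i e_{t+1}} w_{it} \geq \lambda - 1$. Now if $e_{t+1} \in S^* = S^*_{t+1} \setminus S^*_t$, then since $e_{t+1}$ together with the rest of $S^*$ forms a feasible solution, every constraint $i$ with $A_{i e_{t+1}} = 1$ satisfies $b_i \geq 1$ (in fact $b_i \geq W$), and more to the point $A_{i e_{t+1}} \le 1 \le b_i$. So $\sum_i b_i w_{it} \geq \sum_i A_{i e_{t+1}} w_{it} \geq \lambda - 1$, which means the slack term on the right-hand side at time $t$ is already at least $1$. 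Combined with the induction hypothesis $|S^*_t| \leq |S_t| + \sum_i b_i w_{it}/(\lambda-1)$, and noting $|S_{t+1}| = |S_t|$ while the slack does not decrease (weights are monotone nondecreasing in $t$), we get $|S^*_{t+1}| = |S^*_t| + 1 \leq |S_t| + \sum_i b_i w_{it}/(\lambda - 1) + 1 \leq |S_{t+1}| + \sum_i b_i w_{i(t+1)}/(\lambda-1)$, using that the extra $+1$ is covered because $\sum_i b_i w_{i(t+1)} \geq \sum_i b_i w_{it} \geq \lambda - 1$ — wait, this needs the slack to grow by at least $1$, not merely stay $\geq 1$; so I should instead argue directly that $|S^*_{t+1}| = |S^*_t| + 1 \le |S_t| + 1 \le \dots$, but $|S_t| + 1$ need not be $\le |S_{t+1}|$. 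The cleaner route: since the right side at time $t$ already has slack $\ge 1$ available beyond $|S_t|$, and actually I want $|S^*_t| + 1 \le |S_t| + \sum_i b_i w_{it}/(\lambda-1)$; this holds if the induction hypothesis has slack at least $|S^*_t| - |S_t| + 1$, which is not guaranteed pointwise. So the honest argument must use that $w$ strictly increases when $e_{t+1}$ is added in the \emph{acceptance} case, and in the rejection case use a sharper bookkeeping.

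In the acceptance case ($e_{t+1}$ added to $S$), $|S_{t+1}| = |S_t| + 1$, which on its own covers the possible unit increase in $|S^*_{t+1}|$, and since the weights only increase, the slack term does not decrease; so the inequality is preserved trivially. Thus the real content is entirely in the rejection case, and I expect that to be the main obstacle: I would need to show that whenever a member of $S^*$ is rejected, $\sum_i b_i w_{it}$ jumps by at least $\lambda - 1$ relative to the last time the right-hand side "caught up" — equivalently, maintain the invariant in the form $|S^*_t| - |S_t| \le \sum_i b_i w_{it}/(\lambda - 1)$ and observe that a rejection of an $S^*$-element increases the left side by $1$ while the selection rule $\sum_i A_{i e_{t+1}} w_{it} \ge \lambda - 1$, together with $A_{i e_{t+1}} \le b_i$, forces $\sum_i b_i w_{it} \ge \lambda - 1$; but to make the step go through I actually want the increment of the right side across this iteration to be $\ge 1$. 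Since $w_{it}$ does not change at a rejection, this forces me to amortize: charge the rejected $S^*$-element against the weight mass that was \emph{already} accumulated, which is legitimate only if no previous rejection used it. The correct formulation is therefore to prove the slightly stronger statement that the running surplus $\sum_i b_i w_{it}/(\lambda-1) - (|S^*_t| - |S_t|)$ is always nonnegative, tracking that each acceptance of an element increments $|S_t|$ by $1$ and increments $\sum_i b_i w_{it}/(\lambda - 1)$ by a nonnegative amount (from the previous claim's computation, in fact by at most $W\lambda^{1/W}(k+\lambda-1)/(\lambda - 1)$, which I do not even need here), while each rejection of an $S^*$-element is paid for by the guaranteed mass $\sum_i b_i w_{it} \ge \lambda - 1$ sitting in the weights at that moment, provided I verify this mass has not been "spent" before — and this holds because weights are only ever nondecreasing, so the inequality $\sum_i A_{ie_{t+1}} w_{it} \ge \lambda-1$ at the rejection step is a genuine lower bound on current, un-depleted weight. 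I would write the induction in this surplus form to sidestep the bookkeeping subtlety, then translate back to the stated inequality.
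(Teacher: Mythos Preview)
Your inductive/amortized scheme does not close. At a rejection step where $e_{t+1}\in S^*$, the left-hand quantity $|S^*_t|-|S_t|$ increases by $1$ while the weights do not move, so your surplus drops by exactly $1$. To keep the surplus nonnegative you would need it to be at least $1$ \emph{before} the rejection, but all you can extract from the selection rule is $\sum_i b_i w_{it}\ge \sum_i A_{ie_{t+1}} w_{it}\ge \lambda-1$, i.e.\ the slack term is at least $1$. That does not say the surplus is at least $1$: if earlier rejections of $S^*$-elements already consumed some of this slack, the surplus may well be below $1$. Your final sentence (``this holds because weights are only ever nondecreasing'') is the mistake: monotonicity of the weights guarantees the total mass is still $\ge \lambda-1$, but it says nothing about the mass being unspent by previous charges. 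Two rejected $S^*$-elements can see exactly the same weight vector, and you would be charging both of them to the same $\lambda-1$ units.

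The missing idea is to abandon induction and argue directly at time $t$, summing over \emph{all} rejected $S^*$-elements at once and exploiting the feasibility of $S^*$. Write $|S^*_t|\le |S_t|+|S^*_t\setminus S_t|$. Each $j\in S^*_t\setminus S_t$ was rejected at some step, where $\sum_i A_{ij} w_i\ge \lambda-1$ held; by monotonicity of the weights this still holds with $w_{it}$ in place of the weights at the rejection time. Summing over $j\in S^*_t\setminus S_t$ and swapping the order of summation gives
\[
(\lambda-1)\,|S^*_t\setminus S_t|\ \le\ \sum_{j\in S^*_t\setminus S_t}\sum_i A_{ij} w_{it}
\ =\ \sum_i w_{it}\sum_{j\in S^*_t\setminus S_t} A_{ij}
\ \le\ \sum_i b_i w_{it},
\]
where the final inequality uses that $S^*_t\setminus S_t\subseteq S^*$ is feasible, so $\sum_{j\in S^*_t\setminus S_t} A_{ij}\le b_i$. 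This is exactly the step your argument lacks: rather than charging each rejected element to a private $\lambda-1$ units, one charges element $j$ to the coordinates $i$ with $A_{ij}=1$, and feasibility of $S^*$ caps the total charge on coordinate $i$ at $b_i w_{it}$.
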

\begin{proof}
Clearly, $|S^*_t| \leq |S_t| + |S^*_t \setminus S_t|$. Now, notice
that every element $j \in S^*_t \setminus S_t$ was not selected by
our algorithm when it was considered in step $t'+1$ since
$\sum_{i=1}^{m} A_{i j} w_{i t'} \geq \lambda - 1$. since the
weights may only increase during the run of the algorithm, we can
infer that
$$
(\lambda - 1) \cdot |S^*_t \setminus S_t| \leq \sum_{j \in S^*_t
\setminus S_t} \sum_{i=1}^{m} A_{i j} w_{i t} = \sum_{i=1}^{m}
w_{i t} \sum_{j \in S^*_t \setminus S_t} A_{i j} \leq
\sum_{i=1}^{m} w_{i t} b_i \ ,
$$
where the last inequality holds by recalling that the set $S^*_t
\setminus S_t$ is a subset of the optimal solution, and hence,
constitute a feasible solution respecting all constraints. As a
result, $\sum_{j \in S^*_t \setminus S_t} A_{ij} \leq b_i$.~
\end{proof}

We can now utilize the above claims and get that for every $t \in
\{0, \ldots, n\}$,
$$
|S^*_t| \leq |S_t| + \frac{\sum_{i=1}^{m} b_i w_{it}}{\lambda - 1}
\leq |S_t| + \frac{W \lambda^{1/W} (k + \lambda - 1)}{\lambda - 1}
|S_t| = \left(1 + 2W\lambda^{1/W}\right) \cdot |S_t| \ ,
$$
where the last equality holds as $\lambda = k+1$. Therefore, we
can employ Claim~\ref{claim:SparseGreedyBound} with $\alpha = 1 /
(1 + 2W\lambda^{1/W})$, and attain that the solution of our
algorithm approximates the optimal solution to within a factor of
at least $\alpha/ (\alpha + 1) = 1 / (2 + 2W\lambda^{1/W}) =
\Omega(1/(Wk^{1/W}))$.~
\end{proofof}

\paragraph{Acknowledgments:}
The authors thank Chandra Chekuri, Ilan Cohen, Gagan Goel, and Jan
Vondr{\'a}k for valuable discussions on topics related to the
subject of this study.


\appendix
\section{Submodular Maximization with Binary Packing Constraints} \label{appsec:0-1}
We study the special setting of monotone submodular maximization
under binary packing constraints, that is, when $A \in \{0,1\}^{m
\times n}$ instead of $A \in [0,1]^{m \times n}$. Note that we
assume without loss of generality that $b \in \bbN_+^m$. We
demonstrate that our multiplicative updates approach from
Section~\ref{sec:GeneralPacking} can be utilized to attain an
improved approximation guarantee for the underlying setting.
Specifically, we prove the following theorem.

\begin{theorem} \label{th:MainResult3}
There is a deterministic polynomial-time algorithm that achieves
an approximation guarantee of $\Omega(1 / m^{1/(W+1)})$ for
maximizing a monotone submodular function under binary packing
constraints.
\end{theorem}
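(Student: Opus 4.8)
The plan is to reuse Algorithm~\ref{cap:MultiplicativeUpdates} together with essentially the whole analysis of Section~\ref{sec:GeneralPacking}, changing only one detail so as to exploit the integrality of the data: since $A\in\{0,1\}^{m\times n}$ and $b\in\bbN_+^m$, a constraint $i$ becomes violated by a set $S$ only once $\sum_{j\in S}A_{ij}$ reaches $b_i+1$, so every constraint effectively behaves as if it had width $W+1$ rather than $W$. Concretely, I would replace the weight update in line~\ref{alg:WeightUpdate} by $w_i\leftarrow w_i\,\lambda^{A_{ij}/(b_i+1)}$, leaving the initialization $w_{i0}=1/b_i$, the stopping condition $\sum_i b_iw_i\le\lambda$, the greedy selection rule, and the final two-way max-selection untouched, and run the algorithm with update factor $\lambda=e^{W+1}m$. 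Note that $\Lambda_0=\sum_i b_iw_{i0}=m$ is unchanged, and the width is $W=\min_i b_i$, so $b_i\ge W$ for every row with a nonzero entry.

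First I would redo the feasibility argument of Lemma~\ref{lemma:Feasibility}. If the set $S$ present when the loop halts violates constraint $i$ because element $\ell$ was just added, then $\sum_{j\in S}A_{ij}\ge b_i+1$, so
$$
b_iw_i \;=\; \lambda^{\sum_{j\in S}A_{ij}/(b_i+1)} \;\ge\; \lambda^{(b_i+1)/(b_i+1)} \;=\; \lambda ,
$$
hence $\Lambda=\sum_i b_iw_i\ge\lambda$ and the loop must have terminated immediately after $\ell$ entered $S$. Consequently the returned set is either $S\setminus\{\ell\}$, which equals the feasible solution held right before that iteration, or the singleton $\{\ell\}$, which is feasible because $b_i\ge1$; in both cases the output is feasible. (The boundary equality $b_iw_i=\lambda$ is harmless: when $m\ge2$ the other rows contribute at least $m-1>0$ more, so $\Lambda>\lambda$ strictly, and the degenerate case $m=1$ can be handled by a strict inequality in the loop condition or an infinitesimal perturbation of $\lambda$.)

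Next, Claim~\ref{claim:SubmodularBound} and Claim~\ref{claim:OPTBound} apply verbatim: the former concerns $f$ alone, and the proof of the latter uses only the greedy selection rule, $\sum_{j\in J^*}A_{ij}\le b_i$, and the definition $\Lambda_\ell=\sum_i b_iw_{i\ell}$, none of which was altered (only the numerical values of the $w_{i\ell}$ change). The heart of the argument is the analogue of Lemma~\ref{lemma:Approx1}. Writing $a=e\,m^{1/(W+1)}$ so that $\lambda=a^{W+1}$, and using that $A_{i\gamma(\ell)}/(b_i+1)\le 1/(W+1)$ for every row $i$ with a nonzero entry (the exponent being $0$ for all-zero rows), I would plug $y=(W+1)A_{i\gamma(\ell)}/(b_i+1)\in[0,1]$ into $a^y\le 1+ay$ and sum over $i$, together with $b_i/(b_i+1)<1$, to obtain
$$
\Lambda_\ell \;\le\; \Lambda_{\ell-1} \;+\; e(W+1)m^{1/(W+1)}\,\alpha_\ell\delta_\ell
\qquad (\ell=1,\dots,t).
$$
From here the proof of Lemma~\ref{lemma:Approx1} goes through essentially word for word, with $e(W+1)m^{1/(W+1)}$ playing the role of $eWm^{1/W}$: Claim~\ref{claim:OPTBound} gives $\alpha_\ell\le\Lambda_{\ell-1}/(f(S^*)-f(S_{\ell-1}))$, the bound $1+y\le e^y$ turns the recursion into a product, and telescoping from $\Lambda_0=m$ with Claim~\ref{claim:SubmodularBound} yields $\Lambda_t\le m\exp\!\big(e(W+1)m^{1/(W+1)}\ln\tfrac{f(S^*)}{f(S^*)-f(S_t)}\big)$. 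In the nontrivial case the loop ends with $\Lambda_t>\lambda=e^{W+1}m$; taking logarithms and using $\ln(\lambda/m)=W+1$ gives $1<e\,m^{1/(W+1)}\ln\tfrac{f(S^*)}{f(S^*)-f(S_t)}$, hence $f(S_t)\ge\big(1-e^{-1/(e m^{1/(W+1)})}\big)f(S^*)\ge\tfrac{1}{e m^{1/(W+1)}+1}f(S^*)$. Since the returned solution has value at least $f(S_t)/2$, and the remaining cases ($S_t=[n]$, or $f(S_t)\ge f(S^*)$) are dispatched exactly as at the start of the proof of Lemma~\ref{lemma:Approx1}, this establishes the $\Omega(1/m^{1/(W+1)})$ guarantee.

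The step that needs the most care is the cancellation in the last display: the estimate on $\Lambda_\ell$ unavoidably picks up a factor $W+1$ in front of $\alpha_\ell\delta_\ell$ (because the update exponent $A_{ij}/(b_i+1)$ can be as large as $1/(W+1)$ rather than $1/W$), and the approximation ratio is kept from degrading by this factor only because $\lambda=e^{W+1}m$ is chosen so that $\ln(\lambda/m)$ equals precisely $W+1$; the two copies of $W+1$ then annihilate when the lower and upper bounds on $\Lambda_t$ are combined. Everything else is a routine re-run of Section~\ref{sec:GeneralPacking}.
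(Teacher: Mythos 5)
Your proposal is correct and matches the paper's own treatment in Appendix~\ref{appsec:0-1} essentially step for step: the same modified update $w_i\leftarrow w_i\lambda^{A_{ij}/(b_i+1)}$, the same choice $\lambda=e^{W+1}m$, and the same re-run of Claims~\ref{claim:SubmodularBound} and~\ref{claim:OPTBound} with the $(W+1)$ factors cancelling. The only cosmetic difference is that the paper simply makes the loop condition strict ($\sum_i b_iw_i<\lambda$) rather than arguing the boundary case away as you do.
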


Our approach for treating this case is identical to that of the
general case. We employ a multiplicative updates algorithm that is
identical to the algorithm presented for the general case with two
exceptions:
\begin{enumerate}
\item Line~\ref{alg:StopCond}: the first condition is changed to
$\sum_{i=1}^m b_i w_i < \lambda$ instead of $\sum_{i=1}^m b_i w_i
\leq \lambda$.

\item Line~\ref{alg:WeightUpdate}: the weights update is changed
to $w_i \leftarrow w_i \lambda^{A_{ij}/(b_i + 1)}$ instead of $w_i
\leftarrow w_i \lambda^{A_{ij}/b_i}$.
\end{enumerate}

We now prove that the modified algorithm for the binary case
outputs a feasible solution and attains the claimed approximation
ratio. Essentially, these results follow the analogous proofs of
the general case with some minor adjustments.

\begin{lemma} \label{lemma:0-1-Feasibility}
The modified algorithm outputs a feasible solution.
\end{lemma}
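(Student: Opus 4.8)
The plan is to follow the proof of Lemma~\ref{lemma:Feasibility} almost verbatim, adjusting only for the two modifications. As before, I would focus on the set $S$ present when the main loop terminates. If $S$ already respects the packing constraints, then the returned solution does too and we are done. Otherwise, I would show that $S$ became infeasible only at the very last iteration of the loop, so that by inspecting the final two lines of the algorithm the returned set is either $S\setminus\{\ell\}$ or $\{\ell\}$ (where $\ell$ is the last selected element), both of which are feasible.

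To establish this, let $\ell$ be the first element whose selection induces a violation, say of constraint $i$, and suppose this happens at iteration $t$; thus $\sum_{j\in S_t}A_{ij}>b_i$. The key observation specific to the binary setting is that since $A\in\{0,1\}^{m\times n}$ and $b\in\bbN_+^m$, the quantity $\sum_{j\in S_t}A_{ij}$ is a nonnegative integer exceeding the integer $b_i$, hence $\sum_{j\in S_t}A_{ij}\ge b_i+1$. Using the modified weight update $w_i\leftarrow w_i\lambda^{A_{ij}/(b_i+1)}$ together with $w_{i0}=1/b_i$, I would compute
$$
b_i w_{it}=b_i w_{i0}\prod_{j\in S_t}\lambda^{A_{ij}/(b_i+1)}=\lambda^{\sum_{j\in S_t}A_{ij}/(b_i+1)}\ge \lambda^{(b_i+1)/(b_i+1)}=\lambda \ .
$$
This yields $\Lambda_t=\sum_{i=1}^m b_i w_{it}\ge\lambda$, and since the modified stopping condition requires the strict inequality $\sum_{i=1}^m b_i w_i<\lambda$ to proceed, the loop must have halted immediately after $\ell$ was selected. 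Therefore $\ell$ is the last selected element and the returned solution is feasible.

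There is no genuine obstacle here: the only point requiring care is the bookkeeping that the change of the exponent denominator from $b_i$ to $b_i+1$ is exactly compensated by the integrality-based strengthening from $\sum_{j\in S_t}A_{ij}>b_i$ to $\sum_{j\in S_t}A_{ij}\ge b_i+1$, and that the stopping condition was relaxed to a strict inequality precisely so that $\Lambda_t\ge\lambda$ (rather than $>\lambda$) suffices to force termination. Everything else is identical to the argument for Lemma~\ref{lemma:Feasibility}.
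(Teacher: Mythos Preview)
Your proposal is correct and follows essentially the same argument as the paper. The only cosmetic difference is that the paper observes $\sum_{j\in S_t}A_{ij}=b_i+1$ (since $\ell$ is the \emph{first} violating element and each entry is $0$ or $1$), whereas you use the weaker but equally sufficient bound $\sum_{j\in S_t}A_{ij}\ge b_i+1$.
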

\begin{proof}
Let us focus on the solution $S$ when the main loop terminates.
Clearly, if $S$ respects the packing constraints then the returned
solution also respects them. Thus, let us consider the case that
$S$ is infeasible. We next argue that $S$ became infeasible only
at the last iteration of the loop in which element $\ell$ was
selected. Consequently, by inspecting the last two lines of the
algorithm, one can conclude that the returned solution must be
feasible.

For the purpose of establishing the previously mentioned argument,
let $\ell$ be the first element that induces a violation in some
constraint. Specifically, suppose $\ell$ induces a violation in
constraint $i$ at iteration $t$. This implies that $\sum_{j \in
S_t} A_{ij} = b_i + 1$ since all the entries of $A$ are binary.
Therefore,
$$
b_i w_{it} = b_i w_{i0} \prod_{j \in S_t} \lambda^{A_{ij} / (b_i +
1)} = \lambda^{\sum_{j \in S_t} A_{ij} / (b_i + 1)} = \lambda \ ,
$$
where the second equality is due to the fact that $w_{i0} = 1 /
b_i$. This implies that $\sum_{i=1}^m b_i w_{it} \geq \lambda$,
and hence, by inspecting the (modified) main loop stopping
condition, we know that the loop must have terminated immediately
after element $\ell$ was selected.~
\end{proof}

\begin{lemma} \label{lemma:0-1-Approx}
The modified algorithm archives $\Omega(1 /
m^{1/(W+1)})$-approximation by using $\lambda = e^{W+1} m$.
\end{lemma}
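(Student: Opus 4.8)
The plan is to mirror the proof of Lemma~\ref{lemma:Approx1} almost verbatim, carefully tracking the two modifications made for the binary case: the strict stopping condition $\sum_i b_i w_i < \lambda$, and the denominator $b_i+1$ (rather than $b_i$) in the weight update. The fact that $A$ is binary and $b$ integral will enter only at one point, to guarantee that a certain exponent lies in $[0,1]$.

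First I would dispose of the trivial cases. Suppose the main loop terminates after $t$ iterations. Then either $S_t=[n]$, in which case the $1/2$-approximation argument from the start of the proof of Lemma~\ref{lemma:Approx1} (using $\max\{f(S_t\setminus\{j\}),f(\{j\})\}\geq f(S_t)/2$ by submodularity) shows the returned set is good; or, by the modified stopping condition, the loop ended with $\Lambda_t=\sum_i b_i w_{it}\geq\lambda=e^{W+1}m$. The same $1/2$-argument also handles the degenerate subcase $f(S_t)\geq f(S^*)$, which can occur since $S_t$ may be infeasible. Hence in the main case I assume $f(S^*)>f(S_t)$ and $\Lambda_t\geq e^{W+1}m$. (Feasibility of the returned solution is already covered by Lemma~\ref{lemma:0-1-Feasibility}.)

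The core step is to upper bound $\Lambda_t$ by analyzing the per-iteration growth of $\sum_i b_i w_i$. Write $a=e\,m^{1/(W+1)}$, so that $a^{W+1}=e^{W+1}m=\lambda$. The modified weight update gives $\Lambda_\ell=\sum_i b_i w_{i(\ell-1)}\,a^{(W+1)A_{i\gamma(\ell)}/(b_i+1)}$. The crucial observation is that $y:=(W+1)A_{i\gamma(\ell)}/(b_i+1)\in[0,1]$: since $A$ is binary, $A_{i\gamma(\ell)}\in\{0,1\}$, and whenever $A_{i\gamma(\ell)}=1$ the definition of $W=\min\{b_i/A_{ij}:A_{ij}>0\}$ forces $b_i\geq W$, hence $b_i+1\geq W+1$. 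Applying $a^y\leq 1+ay$ (valid for $a\in\bbR_+$, $y\in[0,1]$) and then discarding the harmless factor $b_i/(b_i+1)\leq 1$, I obtain $\Lambda_\ell\leq\Lambda_{\ell-1}+e(W+1)m^{1/(W+1)}\sum_i A_{i\gamma(\ell)}w_{i(\ell-1)}=\Lambda_{\ell-1}+e(W+1)m^{1/(W+1)}\alpha_\ell\delta_\ell$. From here the argument is identical to Lemma~\ref{lemma:Approx1}: plug in the bound $\alpha_\ell\leq\Lambda_{\ell-1}/(f(S^*)-f(S_{\ell-1}))$ from Claim~\ref{claim:OPTBound}, use $1+y\leq e^y$, and telescope from $\Lambda_0=m$ to get $\Lambda_t\leq m\exp\!\bigl(e(W+1)m^{1/(W+1)}\sum_{\ell=1}^t \frac{f(S_\ell)-f(S_{\ell-1})}{f(S^*)-f(S_{\ell-1})}\bigr)$.

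Finally I combine this with $\Lambda_t\geq e^{W+1}m$, cancel $m$, divide by $(W+1)$, and conclude $1\leq e\,m^{1/(W+1)}\sum_\ell \frac{f(S_\ell)-f(S_{\ell-1})}{f(S^*)-f(S_{\ell-1})}$. Invoking Claim~\ref{claim:SubmodularBound} and $f(S_0)=0$ bounds the sum by $\ln\!\bigl(f(S^*)/(f(S^*)-f(S_t))\bigr)$; solving yields $f(S_t)/f(S^*)\geq 1-e^{-1/(e m^{1/(W+1)})}\geq 1/(e m^{1/(W+1)}+1)$ by another use of $1+y\leq e^y$. Together with the two-factor loss from the final max-selection, this gives the claimed $\Omega(1/m^{1/(W+1)})$ approximation. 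I do not expect a genuine obstacle; the only point needing a moment's care is verifying $y\leq 1$ in the exponent inequality — this is exactly where binariness of $A$ and integrality of $b$ (via $b_i+1\geq W+1$) are used — and checking that dropping the $b_i/(b_i+1)$ factor costs nothing; both are routine.
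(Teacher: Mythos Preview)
Your proposal is correct and follows essentially the same approach as the paper's own proof: the same trivial-case handling, the same per-iteration bound on $\Lambda_\ell$ via $a^y\leq 1+ay$ with $a=e\,m^{1/(W+1)}$ and $y=(W+1)A_{i\gamma(\ell)}/(b_i+1)$, the same appeal to Claim~\ref{claim:OPTBound} and Claim~\ref{claim:SubmodularBound}, and the same final simplification to $1/(e m^{1/(W+1)}+1)$. If anything you are slightly more explicit than the paper in justifying $y\in[0,1]$ and in naming the dropped factor $b_i/(b_i+1)\leq 1$; one small remark is that the bound $y\leq 1$ actually follows already from $WA_{ij}\leq b_i$ and $A_{ij}\leq 1$ without invoking binariness, so your attribution of this step to the binary hypothesis is a bit generous---binariness is really doing its work in the feasibility lemma, not here.
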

\begin{proof}
Suppose the main loop terminates after $t$ iterations. Notice that
when the loop terminates either $S_t = [n]$ or $\sum_{i=1}^m b_i
w_{it} \geq e^{W+1} m$. One can easily demonstrate that in the
former case, and in fact whenever $f(S_t) \geq f(S^*)$, the
returned solution is $1/2$-approximation to the optimal one. Thus,
in the remainder of the proof, we shall assume that $f(S^*)
> f(S_t)$ and that the loop terminates with $\Lambda_t =
\sum_{i=1}^m b_i w_{it} \geq e^{W+1} m$.

We concentrate on upper bounding the value of $\Lambda_t$. For
this purpose, we analyze the change in $\sum_{i=1}^m b_i w_i$
along the loop iterations. Observe that for any $\ell = 1, \ldots,
t$,
\begin{eqnarray*}
\Lambda_{\ell} = \sum_{i=1}^m b_i w_{i\ell} & = & \sum_{i=1}^m b_i w_{i(\ell-1)} \cdot \left(e^{W+1} m\right)^{A_{i\gamma(\ell)}/(b_i + 1)}\\
& \leq & \sum_{i=1}^m b_i w_{i(\ell-1)} \cdot \left(1 + \frac{(W+1) e m^{1/(W+1)} A_{i\gamma(\ell)}}{b_i + 1}\right)\\
& \leq & \sum_{i=1}^m b_i w_{i(\ell-1)} + (W+1) e m^{1/(W+1)} \sum_{i=1}^m A_{i\gamma(\ell)} w_{i(\ell-1)}\\
& = & \Lambda_{\ell - 1} + (W+1) e m^{1/(W+1)} \alpha_\ell
\delta_\ell \ .
\end{eqnarray*}
The first inequality can be obtained by plugging $a = e
m^{1/(W+1)}$ and $y = (W+1) A_{i\gamma(\ell)}/ (b_i + 1)$ to the
inequality $a^y \leq 1 + ay$, which is known to be valid for any
$a \in \bbR_+$ and $y \in [0,1]$, while the last equality results
from the definition of $\alpha_\ell$. By
Claim~\ref{claim:OPTBound}, we know that $\alpha_\ell \leq
\Lambda_{\ell-1} / (f(S^*) - f(S_{\ell-1}))$ in case $f(S^*) >
f(S_{\ell-1})$. The latter condition clearly holds since $f(S^*) >
f(S_t)$, and $f(S_t) \geq f(S_{\ell-1})$ for any $\ell$ under
consideration. Therefore,
$$
\Lambda_\ell \leq \Lambda_{\ell-1} \cdot \left(1 + \frac{(W+1) e
m^{1/(W+1)} \delta_\ell}{f(S^*) - f(S_{\ell-1})}\right) \leq
\Lambda_{\ell-1} \cdot \exp\left(\frac{(W+1) e m^{1/(W+1)}
\delta_\ell}{f(S^*) - f(S_{\ell-1})}\right) \ ,
$$
where the last inequality is due to the fact that $1 + y \leq
e^y$. The resulting recursive definition can be used, in
conjunction with the base case $\Lambda_0 = m$, to upper bound
$\Lambda_t$ by
$$
\Lambda_t \leq \Lambda_0 \cdot \prod_{\ell = 1}^{t} \exp
\left(\frac{(W+1) e m^{1/(W+1)} \delta_\ell}{f(S^*) -
f(S_{\ell-1})}\right) = m \cdot \exp \left((W+1) e m^{1/(W+1)}
\sum_{\ell = 1}^{t}\frac{ f(S_\ell) - f(S_{\ell-1})}{f(S^*) -
f(S_{\ell-1})}\right) \ .
$$
Recall that we assumed that the loop terminated with $\Lambda_t
\geq e^{W+1} m$. This lower bound on $\Lambda_t$ can be utilized,
together with the upper bound on $\Lambda_t$, to yield
$$
1 \leq em^{1/(W+1)} \sum_{\ell = 1}^{t}\frac{ f(S_\ell) -
f(S_{\ell-1})}{f(S^*) - f(S_{\ell-1})} \leq em^{1/(W+1)}
\ln\left(\frac{f(S^*) - f(S_0)}{f(S^*) - f(S_t)} \right) \ ,
$$
where the last inequality is due to the
Claim~\ref{claim:SubmodularBound}. Noting that $f(S_0) = 0$, one
can use simple algebraic manipulations and obtain that $1 /
(em^{1/(W+1)} + 1) \leq f(S_t) / f(S^*)$.~
\end{proof}

\end{document}